\newcommand{\sub}[1]{_{\!\mathsmaller{#1}}}
\newcommand{\subw}[1]{_{\!\mathsmaller{\, #1}}}
\newcommand{\eq}[1]{Eq.~\eqref{#1}}
\newcommand{\fig}[1]{Fig.~\ref{#1}}
\newcommand{\sect}[1]{Sec.~\ref{#1}}
\newcommand{\app}[1]{Appendix~(\ref{#1})}
\newcommand{\lemref}[1]{Lemma~\ref{#1}}
\newcommand{\thmref}[1]{Theorem~\ref{#1}}
\newcommand{\ts}{\textsuperscript}
\newcommand{\<}{\langle}
\renewcommand{\>}{\rangle}
\newcommand{\ket}[1]{\left|{#1}\right\rangle}
\newcommand{\bra}[1]{\left\langle{#1}\right|}
\newcommand{\pr}[1]{|{#1}\rangle \langle {#1} |}
\newcommand{\h}{{\mathcal{H}}}
\newcommand{\e}{{\mathcal{E}}}
\newcommand{\pp}{{\mathcal{P}}}
\newcommand{\qq}{{\mathcal{Q}}}
\newcommand{\vv}{{\mathcal{V}}}
\newcommand{\one}{\mathds{1}}
\newcommand{\zero}{\mathds{O}}
\newcommand{\imag}{\mathfrak{i}}
\newcommand{\louv}{\mathscr{L}}
\newcommand{\tr}{\mathrm{tr}}
\theoremstyle{plain}
\newtheorem{defn}{Definition}
\numberwithin{defn}{section}
\numberwithin{prop}{section}
\numberwithin{con}{section}
\numberwithin{obs}{section}
\newtheorem{lem}{Lemma}
\numberwithin{lem}{section}
\newtheorem{thm}{Theorem}
\numberwithin{thm}{section}
\numberwithin{cor}{section}
\numberwithin{equation}{section}
\begin{document}

\title{Low-control and robust quantum refrigerator and applications with electronic spins in diamond}

\author{M. Hamed Mohammady}
\affiliation{Physics of Information and Quantum Technologies Group, Instituto de Telecomunica\c{c}\~oes,  Lisbon, Portugal}
\affiliation{Department of Physics and Astronomy, University of Exeter, Stocker Road, Exeter, EX4 4QL, United Kingdom}
\author{Hyeongrak Choi}
\affiliation{Research Laboratory of Electronics, Massachusetts Institute of Technology, Cambridge, Massachusetts, 02139, United States}
\author{Matthew E. Trusheim}
\affiliation{Research Laboratory of Electronics, Massachusetts Institute of Technology, Cambridge, Massachusetts, 02139, United States}
\author{Abolfazl Bayat}
\affiliation{Institute of Fundamental and Frontier Sciences, University of Electronic Science and Technology of China, Chengdu, PR China}
\affiliation{Department of Physics and Astronomy, University College London, Gower St., London, WC1E 6BT, United Kingdom}
\author{Dirk Englund}
\affiliation{Research Laboratory of Electronics, Massachusetts Institute of Technology, Cambridge, Massachusetts, 02139, United States}
\author{Yasser Omar}
\affiliation{Physics of Information and Quantum Technologies Group, Instituto de Telecomunica\c{c}\~oes,  Lisbon, Portugal}
\affiliation{Instituto Superior T\'{e}cnico, Universidade de Lisboa, Lisbon, Portugal}

\begin{abstract} 

We propose a general protocol for low-control refrigeration and thermometry of thermal qubits, which can be implemented using electronic spins in diamond. The refrigeration is implemented by a probe, consisting of a network of interacting spins. The protocol involves two operations: (i) free evolution of the probe; and (ii) a swap gate between one spin in the probe and the thermal qubit we wish to cool. We show that if the initial state of the probe falls within a suitable range, and the free evolution of the probe is both unital and conserves the excitation in the $z$-direction,  then the cooling protocol will always succeed, with an efficiency that depends on the rate of spin dephasing and the swap gate fidelity. Furthermore, measuring the probe after it has cooled many qubits provides an estimate of their temperature.   We provide a specific example where the probe is a Heisenberg spin chain, and suggest a physical implementation using electronic spins in diamond. Here the probe is constituted of nitrogen vacancy (NV) centers, while the thermal qubits are dark spins. By using a novel pulse sequence, a chain of NV centers can be made to evolve according to a Heisenberg Hamiltonian. This proposal allows for a range of applications, such as NV-based nuclear magnetic resonance of photosensitive molecules kept in a dark spot on a sample, and it opens up possibilities for the study of quantum thermodynamics, environment-assisted sensing, and many-body physics. 

\end{abstract}

\maketitle

\section{Introduction}

Quantum mechanics and thermodynamics are arguably two of the most successful physical theories to date. Quantum thermodynamics \citep{Anders-thermo-review,Goold-thermo-review,Millen-thermo-review} is the interdisciplinary field that studies how the two theories influence one-another. For example, the thermodynamic laws of macroscopic physics are thought to \emph{emerge} from the laws of quantum mechanics, when the number of  quantum particles in a system grows to be infinitely large \cite{Sewell-Macroscopic}. On the other hand, thermodynamic protocols have been shown to operate differently at the scale of few-particle quantum systems \cite{Uzdin2015b,Hovhannisyan-Correlations,Anders-Measurement-Thermodynamics,Thermo-quantum-operation,Perry-Fluctuating-Work}. A central goal of quantum thermodynamics concerns the design of efficient and robust quantum mechanisms to cool such quantum systems: i.e., the development of ``quantum refrigerators'' \cite{Popescu-small-thermal-machines,Luis-feedback-cooling, Brunner-refrigerators, Brunner-Thermal-Machine}. Cooling is an essential component for many emerging quantum technologies, including fault tolerant quantum computation \cite{nielsenchuang} and quantum metrology at the Heisenberg limit of sensitivity \cite{Advances-Metrology}.   This is because many of the salient features of quantum mechanics only emerge when the system is in a low-entropy state, and cooling is the most  natural method of entropy reduction.  The cooling mechanisms that have been developed so far can be classified into three major groups: (i) dissipative cooling, where the system is cooled by bringing it into thermal equilibrium with a  reservoir of lower temperature, which can be prepared with an absorption refrigerator \cite{Amikam-Kosloff-fridge,Luis-absorption-refrigerators,Luis-endoreversible-refrigerators}; (ii) dynamical cooling, where the dynamics of the system-plus-reservoir composite is controlled \cite{Dynamical-Cooling, Reeb-Wolf-Landauer,Mohammady-Landauer}; and (iii) measurement-assisted cooling, where entropy is reduced through projective measurements, followed by conditional unitary gates that transform the post-measurement state of the system to, say, the ground state of its Hamiltonian \cite{Vanner-2013, Abdelkhalek-measurement}. All of these strategies suffer from different drawbacks. For example, while dissipative dynamics with a reservoir requires the least degree of control, it is normally slow. Moreover, the colder the initial temperature of a reservoir is, the more time and power is required to cool it further due to the third law of thermodynamics \cite{Masanes-third-law}. Dynamical cooling, on the other hand, can cool at a faster rate, but  generally requires a very high degree of control and reservoir-engineering.  Finally, although measurement-assisted cooling can be fast, measuring one system can disturb others that are nearby. Furthermore, measurement-assisted cooling requires single-shot measurements, but these are often difficult or even impossible to implement experimentally.   For example, although single-shot measurement of spins associated with the diamond nitrogen vacancy (NV) center were recently achieved at room \cite{ShieldsSingle} and low temperature \cite{RobledoSingle}, they are still limited in fidelity. A cooling strategy that combines the benefits of being fast, requiring low control, and acting locally on small systems would therefore be of great use.  

Controlling the dynamics of non-equilibrium many-body systems has been proven to be  efficient for information transfer \cite{Bose-2003,Nikolopoulos-2013,Bayat-Omar-Chain}, entanglement generation \cite{Bayat-2010}, and quantum gate operations \cite{Yao-2011,Banchi-2011}. This relies on the unitary evolution of the system, generated by its Hamiltonian, to perform the desired state transformation. Consequently, the system must be initialized in a non-equilibrium state, such as a superposition of energy eigenstates.  The speed of the unitary dynamics is determined by the couplings between the particles and can be engineered to be fast. One may wonder if it is possible to exploit the coherent dynamics of a non-equilibrium quantum system, which we call a refrigeration probe, to cool another system that is in thermal equilibrium. There are three major questions that need to be addressed: (i) will the refrigeration protocol be robust, i.e., will it always cool the thermal system, or could it possibly heat the system instead?; (ii) how much control is required for the probe to function as a refrigerator?; (iii) what is the maximum amount of entropy that the probe can extract from the thermal systems -- if the initialization time of the probe is long, and we may only extract a small quantity of entropy with it, then this will limit any potential benefits that fast, coherent dynamics may offer. 

This paper addresses these questions. We consider how to use the coherent dynamics of a probe to cool quantum bits (qubits) with temperature $T$. The setup is shown in \fig{setup}, where the probe is a system of interacting spin-half systems (red spheres), and the black spheres are a system of thermal qubits. We prove that if  the probe is initialized in an appropriate ``cold'' state, and that its free evolution is both unital and conserves the excitation in the $z$-direction, then it will always cool the thermal qubits it interacts with. We show that minimal control is required  -- one only needs to engineer a time-controlled interaction Hamiltonian between one spin in the probe and the qubit to be cooled, which will generate a swap operation between them.    Finally, a probe with multiple ``cold'' spins allows more entropy to be extracted from the thermal qubits.  This will reduce the need for constant re-initialization of the probe. As an additional benefit, we show that the probe can also act as a thermometer \cite{Toyli2013, quantum-thermometry-Mann, Luis-thermometry, single-qubit-thermometry, Local-quantum-thermometry} to estimate the temperature, $T$. 

We note that although this protocol has similarities with algorithmic cooling \cite{Schulman-algorithmic-cooling}, it is different in that the system that absorbs entropy, i.e., the probe, is an interacting many-body system, and not an ensemble of qubits. This allows for the protocol to function with low degree of control. Moreover, due to the reliance on coherent dynamics, the protocol falls most closely with the class of ``dynamical cooling'' mentioned above, except that it does not involve the thermal reservoir.   

While this mechanism is very general, we propose a specific model where the probe is a one-dimensional  Heisenberg spin chain, and investigate the performance of this probe numerically. Furthermore, we offer an implementation of this model with electronic spins in diamond, where the probe is composed of  nitrogen vacancy centers (NVs) and the thermal qubits are dark spins. The probe could allow cooling and sensing of a  photosensitive target molecule if one end of the spin chain is in proximity to the target molecule, in the dark, and the other end is cooled by optical pumping. A pulse sequence, consisting of a modified version of the WAHUHA \cite{WAHUHA}, is proposed to achieve a Heisenberg spin chain with NVs.

\begin{figure}[!htb]
\includegraphics[width=8 cm]{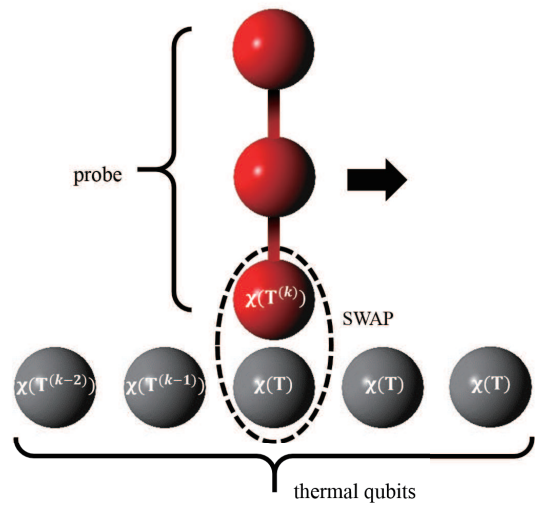}
\caption{Schematic of the cooling process. The probe (red spheres), is a network of spin-half systems, coupled through a Heisenberg interaction. The black spheres are a collection of thermal qubits that are initially in the state $\chi(T)$, with temperature $T> 0$.  The protocol cools the $k$\ts{th} thermal qubit by: (i) first allowing  the probe to evolve freely, for a time $\tau_k$, so that the target spin is prepared in the  state $\chi(T^{(k)})$, where $T^{(k)} \leqslant T$; and (ii)  subsequently, swapping the target spin of the probe with the $k$\ts{th} thermal qubit, thus cooling it.}
\label{setup}
\end{figure}

\section{Theoretical model}\label{idealised model}
\subsection{The set-up}\label{The set-up}

Consider a collection of thermal qubits. Each thermal qubit $\qq$ has  the  Hamiltonian 
\begin{align}\label{qubit Hamiltonian}
H\sub{\qq} = \frac{\omega}{2}\sigma^z, 
\end{align}
where $\omega>0$ is the spectral gap of the Hamiltonian and $\sigma^z:= |0\>\<0| - |1\>\<1|$ is the Pauli-Z operator, and is initially in the state
\begin{equation}\label{bath qubit}
\chi(T) :=  \frac{e^{-H\subw{\qq}/k_BT}}{Z}.
\end{equation}
Here, $k_B$ is  Boltzmann's constant, $T$ is the temperature, and $Z:= \tr[e^{-H\subw{\qq}/k_BT}]$ is the partition function. Throughout, we work in units of $\hbar = 1$.

We wish to cool the thermal qubits by using a refrigeration probe, $\pp$, consisting of a network of $N$ spin-half systems. The composite system of the probe and the $k$\ts{th} thermal qubit is initially in the product state
\begin{align}
 \rho^{(k)}\subw{\pp+\qq}:=\rho^{(k-1)}\subw{\pp} \otimes \chi(T).\label{probe bath k composition}
\end{align}
Allowing the probe to evolve freely for a duration of $\tau_k$, and then swapping the target spin of the probe with the thermal qubit, produces the state
 \begin{align}\label{total evolution}
 \rho^{(k)}\subw{\pp+\qq}(\tau_k):=  \left(\mathrm{SWAP} \circ \e_{\tau_k}   \right) \left[\rho^{(k)}\subw{\pp+\qq}\right].
\end{align}
Here, $\e_{\tau_k}$ is the free evolution quantum channel (completely positive, trace preserving map) acting on the probe, and $\mathrm{SWAP}$ is a (possibly imperfect) swap operation between the $k$\ts{th} thermal qubit and the target spin of the probe.  After the joint evolution, the probe and thermal qubit have the new states
\begin{align}
 \rho^{(k)}\subw{\pp}:=  \tr\subw{\qq} \left[\rho^{(k)}_{\pp+\qq}(\tau_k)\right],\nonumber \\
 \rho\subw{\qq}^{(k)}:= \tr\subw{\pp} \left[\rho^{(k)}_{\pp+\qq}(\tau_k)\right].\label{partial traces post total evolution}
\end{align}
We omit the $\tau_k$ dependence for simplicity. The probe will then be moved to the next thermal qubit and the process can begin anew. 

In general, the only constraints we impose on the probe's free evolution quantum channel is that it must be: (i) unital; and (ii) $\sigma^z$-excitation conserving. The quantum channel $\e_{\tau_k}$ is unital if and only if $\e_{\tau_k}(\one) = \one$, whereas it is $\sigma^z$-excitation conserving if and only if 
\begin{align}
\sum_{n=1}^N\tr[\sigma^z_n \rho\subw{\pp}] = \sum_{n=1}^N\tr[\sigma^z_n \e_{\tau_k}(\rho\subw{\pp})]
\end{align}
for all probe states $\rho\subw{\pp}$. Here, $\{\sigma^i_n| i \in \{x,y,z\} \}$ are the Pauli operators acting on the $n$\ts{th} spin in $\pp$. 

In order to numerically investigate the performance of the probe, we shall study one particular model that satisfies both (i) and (ii). Here, the probe is modeled as an isotropic Heisenberg spin chain with the Hamiltonian 
\begin{equation}
H\subw{\pp}:=J  \sum_{n=1}^{N-1}  \bm{\sigma_n} \cdot \bm{\sigma_{n+1}},
\label{spin-chain Hamiltonian}
\end{equation}
where $\bm{\sigma_n}:= (\sigma^x_n,\sigma^y_n,\sigma^z_n)$ is a vector of Pauli operators on the $n$\ts{th} spin, and $J$ is the interaction strength between each nearest-neighbour spins. The free evolution quantum channel of the probe will be $\e_{\tau_k}= e^{\tau_k \louv}$, with  
\begin{align}
\louv: \rho\subw{\pp} &\mapsto \imag [\rho\subw{\pp}, H\subw{\pp}]_-   +  \Gamma\sum_{n=1}^N\left( \sigma^z_n \rho\subw{\pp} {\sigma^z_n} - \rho\subw{\pp}\right)\label{master equation}
\end{align}
the Liouville super-operator that generates the evolution, where  $\Gamma\geqslant 0$ is the dephasing strength. Although non-Markovian dephasing would still satisfy the requirements we impose on the free evolution, we choose the Markovian case because of its simplicity, and because the absence of  coherence revivals makes it a ``worst-case'' scenario.  Lastly, the validity of this model for the case of electronic spins in diamond is confirmed with the pulse sequence applied in section.~\ref{NV chain}.

We now consider two applications that the probe can be used for: refrigeration, and thermometry.

\subsection{Application 1: cooling }\label{Dynamics of the probe and bath qubits}

\begin{figure}[!htb]
\subfigure{\label{T=10-unitary-varyN-efficiency}
\includegraphics[width=8 cm]{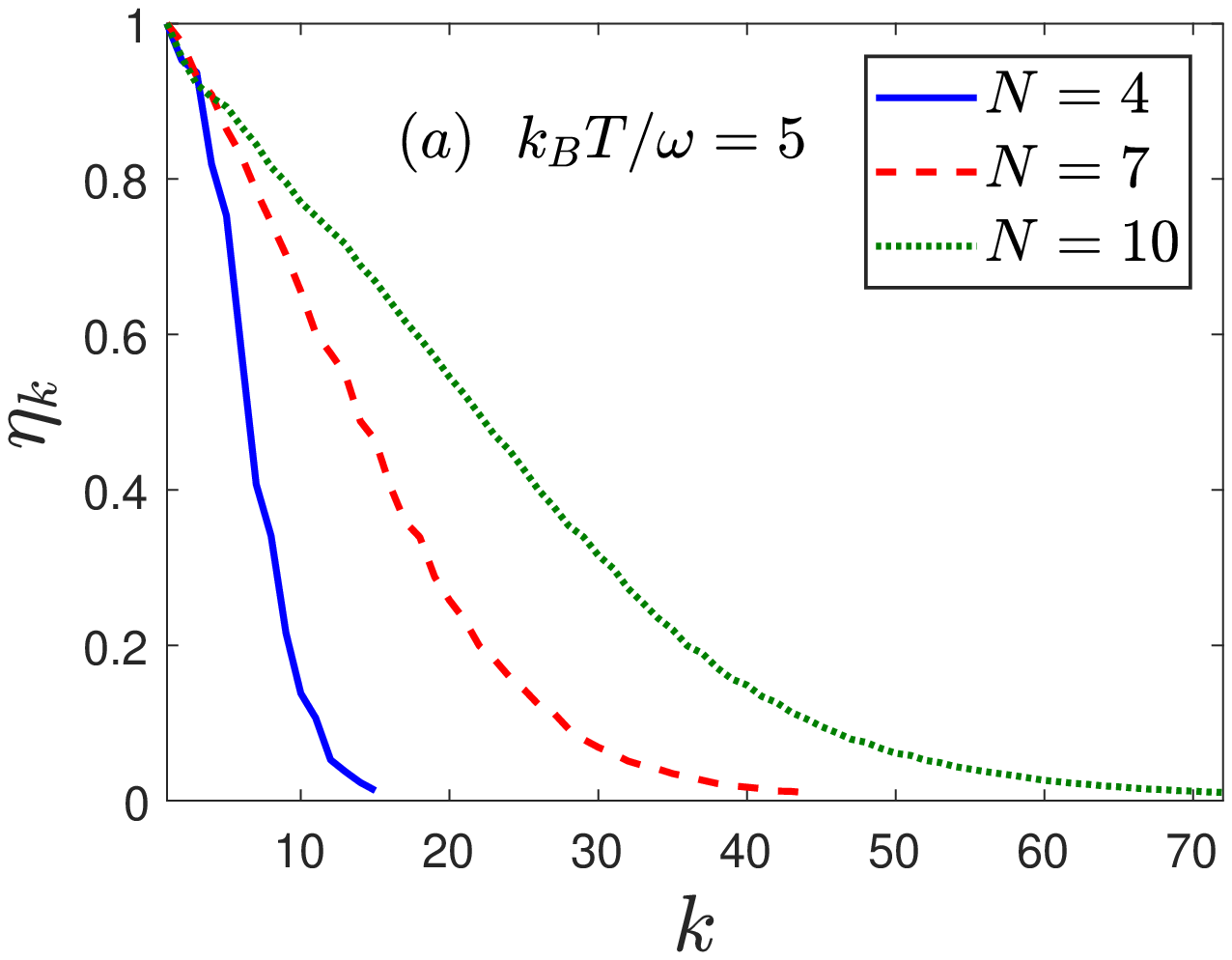}}
\subfigure{\label{N=10-unitary-varyT-efficiency}
\includegraphics[width=8 cm]{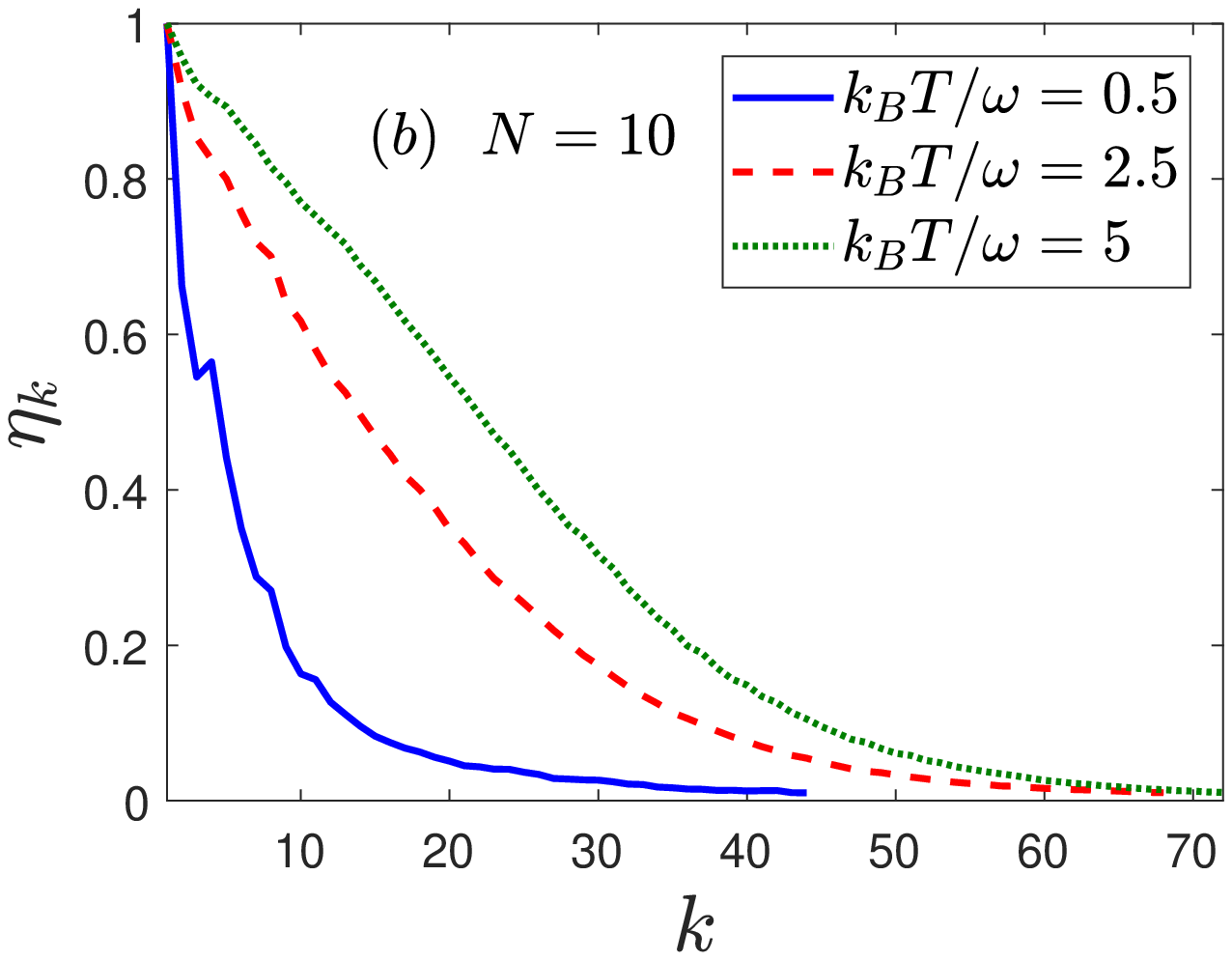}}
\caption{ The ideal cooling protocol, using a Heisenberg spin chain.  (a) and (b) show, respectively,  the dependence of the cooling efficiency of the $k$\ts{th} thermal qubit, $\eta_k$, on the length of the chain, $N$, and temperature of the thermal qubits, $T$. Here, $k_B$ is Boltzmann's constant, and $\omega$ is the spectral gap of the thermal qubit's Hamiltonian.}
\label{Cooling protocol}
\end{figure}

 As shown in \app{Cooling protocol always works appendix}, if the initial state of the probe can be written as
\begin{equation}
\rho\subw{\pp}^{(0)} = \bigotimes_{n=1}^N \chi(T_n),\label{general probe initial state}
\end{equation}
where $\chi(T)$ is defined in \eq{bath qubit}, such that for all $n$, $T_n \leqslant T$,  and if the free evolution quantum channel $\e_{\tau_k}$ defined in \eq{total evolution} is both unital and $\sigma^z$-excitation conserving, then irrespective of the thermal qubit number $k$, and the waiting times $\{\tau_k\}_k$, we have
\begin{equation}
\rho\subw{\qq}^{(k)} = \chi(T^{(k)}),
\end{equation}
with $T^{(k)} \leqslant T$. In other words, the probe will always either cool the thermal qubit, or leave it the same.  Note that \eq{general probe initial state} is not a thermal state of the probe. Each spin in the probe, however, can be thought of as being ``colder'' than the thermal qubits in a counterfactual sense --  if the probe was also a system of non-interacting spins, each with Hamiltonian $H\subw{\qq}$.

To quantify the performance of each cooling process, we introduce the cooling efficiency, defined as
\begin{equation}
\eta_k:= \frac{T- T^{(k)}}{T}.\label{cooling efficiency}
\end{equation}
 We wish to maximise the cooling efficiency at each stage by optimizing the waiting times $\{\tau_k\}_k$. This can be done if we have prior knowledge of:   the temperature, $T$; the qubit Hamiltonian $H\subw{\qq}$; the probe's free evolution quantum channel $\e_{\tau_k}$; and the initial state of the probe.  By simulating the dynamics of the probe, we may  find the shortest  time $\tau_k$ that maximizes $\eta_k$. To this end, let us consider the specific model where the probe is a Heisenberg spin chain of length $N$, whereby we may simulate the dynamics of the probe by numerically solving \eq{master equation} using the Runge-Kutta-Fehlberg method. Here, we limit the free evolution time to  $J\tau_k \in [0, N]$,  so that the excitations of the probe have enough time to travel from one end of the chain to the other. The optimal time $\tau_k$ is then chosen by tracking the reduced state of the first spin of the probe, at all times, and choosing the shortest time at which it will have the smallest ``temperature'', as defined by \eq{bath qubit}.

At the $k$\ts{th} stage of the cooling protocol, the total entropy of the thermal qubits is reduced by
\begin{equation}
\Delta S\subw{\qq}^{\mathrm{total}}(k):=\sum_{i=1}^k \Delta S\subw{\qq}^{(i)},
\end{equation}
where
\begin{equation}
\Delta S\subw{\qq} ^{(k)} := S_T - S_{T^{(k)}}
\end{equation}
is the entropy reduction of the $k$\ts{th} thermal qubit, and  
\begin{equation}
S_T \equiv S(\chi(T)):= -\tr[\chi(T) \ln(\chi(T))]
\end{equation} 
is the von Neumann entropy of the thermal qubit at temperature $T$. There is a one-to-one correspondence between the efficiency $\eta_k$ and the entropy reduction $\Delta S\subw{\qq}^{(k)}$, where a higher efficiency translates to a larger entropy reduction and vice versa. However, these quantities scale differently, as will become apparent when we discuss the effect of imperfections on the cooling protocol in \sect{imperfections section}. As shown in \app{Entropic inequalities}, the total entropy reduction of the thermal qubits is bounded by the entropy increase of the probe:
\begin{equation}\label{entropy_bound_chain}
\Delta S\subw{\qq}^{\mathrm{total}}(k)\leqslant  S\left(\rho^{(k)}\subw{\pp}\right) - S\left(\rho^{(0)}\subw{\pp}\right) \leqslant N S_T,
\end{equation}
where a necessary condition for achieving the upper bound is for the probe to be initially prepared in the state
\begin{equation}
\rho\subw{\pp}^{(0)}= \pr{1}^{\otimes N}.\label{Heisenberg probe initial state}
\end{equation}
Comparing this with \eq{general probe initial state} shows that for all $n$, we have $T_n = 0$. \eq{entropy_bound_chain} shows that the more spins are present in the probe, the more entropy one can extract from the collection of thermal qubits.

\fig{Cooling protocol} demonstrates the efficiency of a Heisenberg spin chain of length $N$ for refrigeration. For the moment, we will consider the optimal scenario where: the initial state of the probe is given by \eq{Heisenberg probe initial state}; the probe evolves in the absence of dephasing; and the swap operation is perfect and instantaneous.  In \fig{T=10-unitary-varyN-efficiency} we plot $\eta_k$ as a function of $k$ for various  $N$. As can be seen,  the efficiencies decrease as the protocol progresses. However, larger chains will provide higher efficiencies over more iterations. Similarly, in \fig{N=10-unitary-varyT-efficiency} we plot $\eta_k$ as a function of $k$ for various temperatures when the probe length is fixed to $N=10$. As before, the efficiencies decrease as the protocol progresses. The performance of the protocol improves  for hotter qubits. Because of the one-to-one correspondence between entropy reduction and efficiency, the behavior of $\Delta S_\qq^{\mathrm{total}}(k)$ will be qualitatively identical in this case.

\subsection{Application 2: thermometry} \label{Pseudo-thermalisation of the probe, and thermometry}

\begin{figure}[!htb]
\subfigure{\label{T_10-unitary-varyN-Thermometry}
\includegraphics[width = 8 cm]{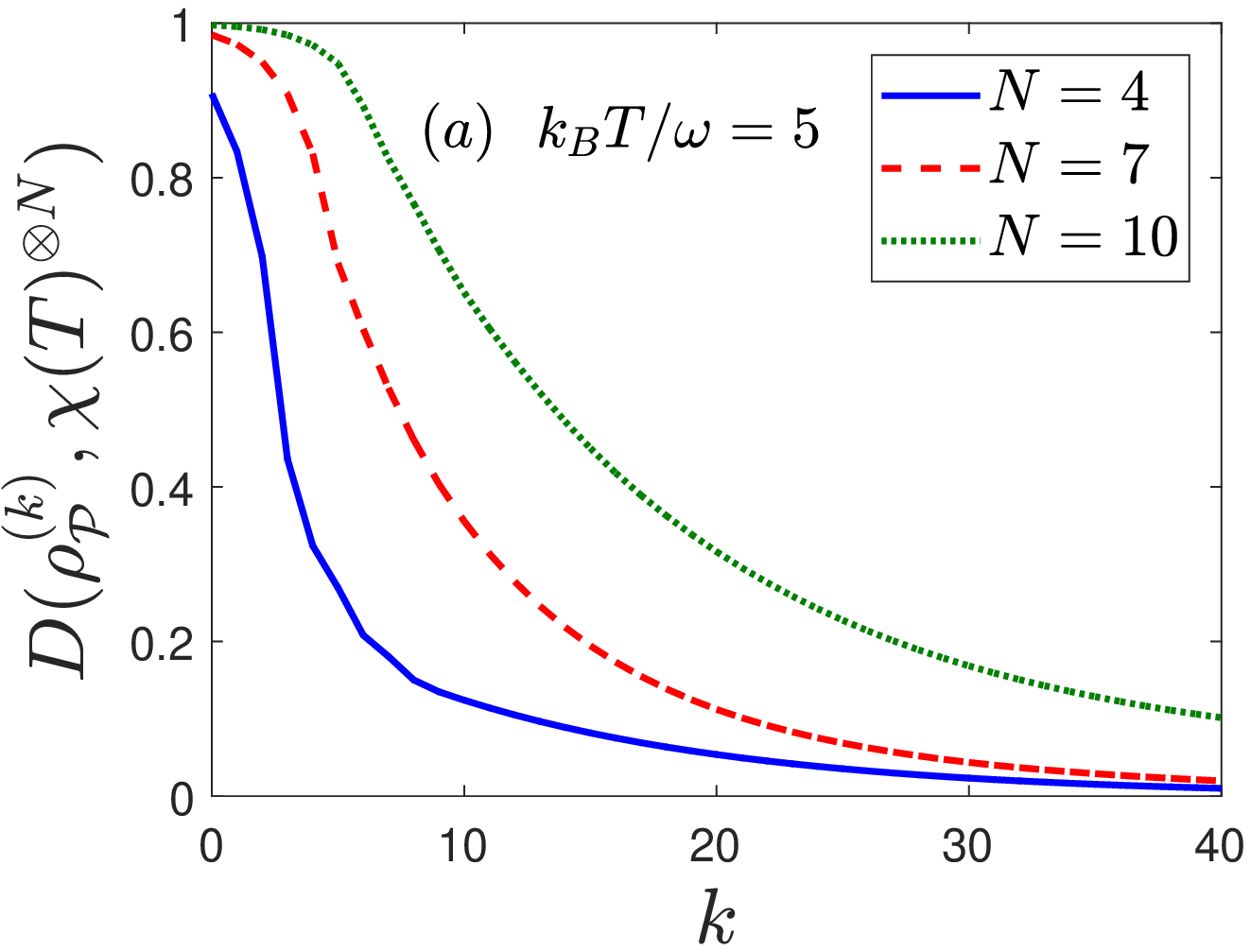}}
\subfigure{\label{N_10-unitary-varyT-Thermometry}
\includegraphics[width= 8 cm]{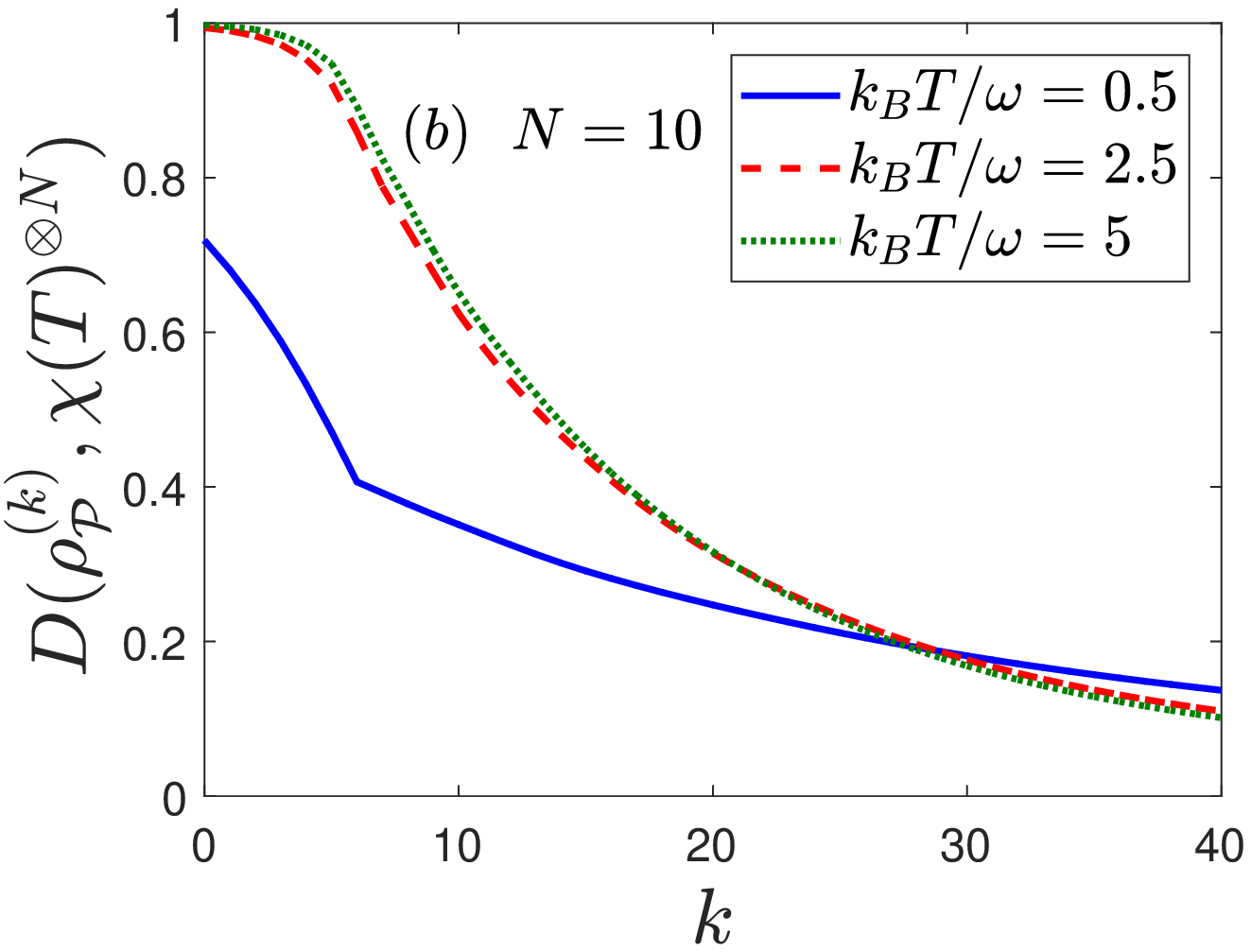}}
\caption{Pseudo-thermalization of a Heisenberg spin chain as a result of the time dynamics, where $D(\rho^{(k)}_\pp, \chi(T)^{\otimes N})$ is the trace distance between the state of the probe and the pseudo-thermal state, after the $k$\ts{th} thermal qubit has been cooled.  In all cases, we set $J\tau_k =1$. (a) and (b) show, respectively, the dependence of pseudo-thermalization on the chain length $N$ and temperature $T$, for the ideal case. Here, $k_B$ is Boltzmann's constant, and $\omega$ is the spectral gap of the thermal qubit's Hamiltonian.  }\label{thermometry}
\end{figure}

 In \app{stationary states} we prove that, given the dynamics given in \eq{total evolution},
\begin{equation}\label{steady_state}
\rho\subw{\pp}^{(\infty)}= \chi(T)^{\otimes N}
\end{equation}
is a stationary state of the probe. Moreover, if the probe has an XXZ Hamiltonian (of which the Heisenberg spin chain is a specific example), this is the unique stationary state.  We say this is a pseudo-thermal state because it is not given as the Gibbs state of the probe Hamiltonian, but rather as  $N$ copies of the thermal qubits $\chi(T)$. This feature of the probe allows it to be used  for thermometry; we may obtain an estimate for the temperature of the thermal qubits, $T$, from the measurement statistics of the observable  $H\subw{\qq}$ on every spin of the probe.  If the probe is prepared in the steady state $\rho\subw{\pp}^{(\infty)}$, we will have $N$ identical copies of $\chi(T)$ for our measurement statistics.  In practical situations, however, we take the probe for thermometry just after a finite number of iterations, when the steady state has not yet been fully achieved.   

The trace distance between the state of the probe and the pseudo-thermal state, $D(\rho\subw{\pp}^{(k)}, \chi(T)^{\otimes N})$, bounds the accuracy of our estimation of $T$. Due to the contractivity of the trace distance under quantum channels, this will never increase  as we continue to interact with the thermal qubits \cite{Heinosaari}. How fast this quantity vanishes -- which we refer to as the rate of pseudo-thermalisation -- determines the performance of the probe for thermometry. Furthermore, unlike the case of cooling, we are by definition ignorant of the temperature $T$. Therefore, we cannot simulate the dynamics of the probe, and have no means of optimizing the waiting times $\{\tau_k\}_k$ between consecutive swaps. Accordingly, we must make an arbitrary choice.

\fig{thermometry} shows the rate of pseudo-thermalisation of the Heisenberg spin chain of length $N$. As before, we assume for the ideal case where:  the initial state of the probe is given by \eq{Heisenberg probe initial state}; the probe evolves in the absence of dephasing; and the swap operation is perfect and instantaneous.   As we have to make an arbitrary choice for the waiting times between consecutive swaps,  we set $J \tau_k = 1$ for all $k$.  In  \fig{T_10-unitary-varyN-Thermometry} we show the dependence of pseudo-thermalization on probe length $N$ for a fixed temperature of $k_BT/\omega = 5$. It is evident that  increasing  $N$ slows the rate of pseudo-thermalization. This implies a trade-off between the time required for thermometry, and the accuracy of thermometry; the more spins we have in the probe, the better our measurement statistics will be, but the longer we need to wait before making these measurements.   In \fig{N_10-unitary-varyT-Thermometry} we show the dependence of pseudo-thermalization on the temperature, for a fixed probe length of $N=10$. Here, we cannot conclude that an increase in temperature leads to a faster, or slower, rate of pseudo-thermalisation. This is because the lines in \fig{N_10-unitary-varyT-Thermometry} cross.

\subsection{Imperfections}\label{imperfections section}

\begin{figure*}
\subfigure{\label{N_10-T_10-varyGamma-efficiency}
\includegraphics[width = 8 cm]{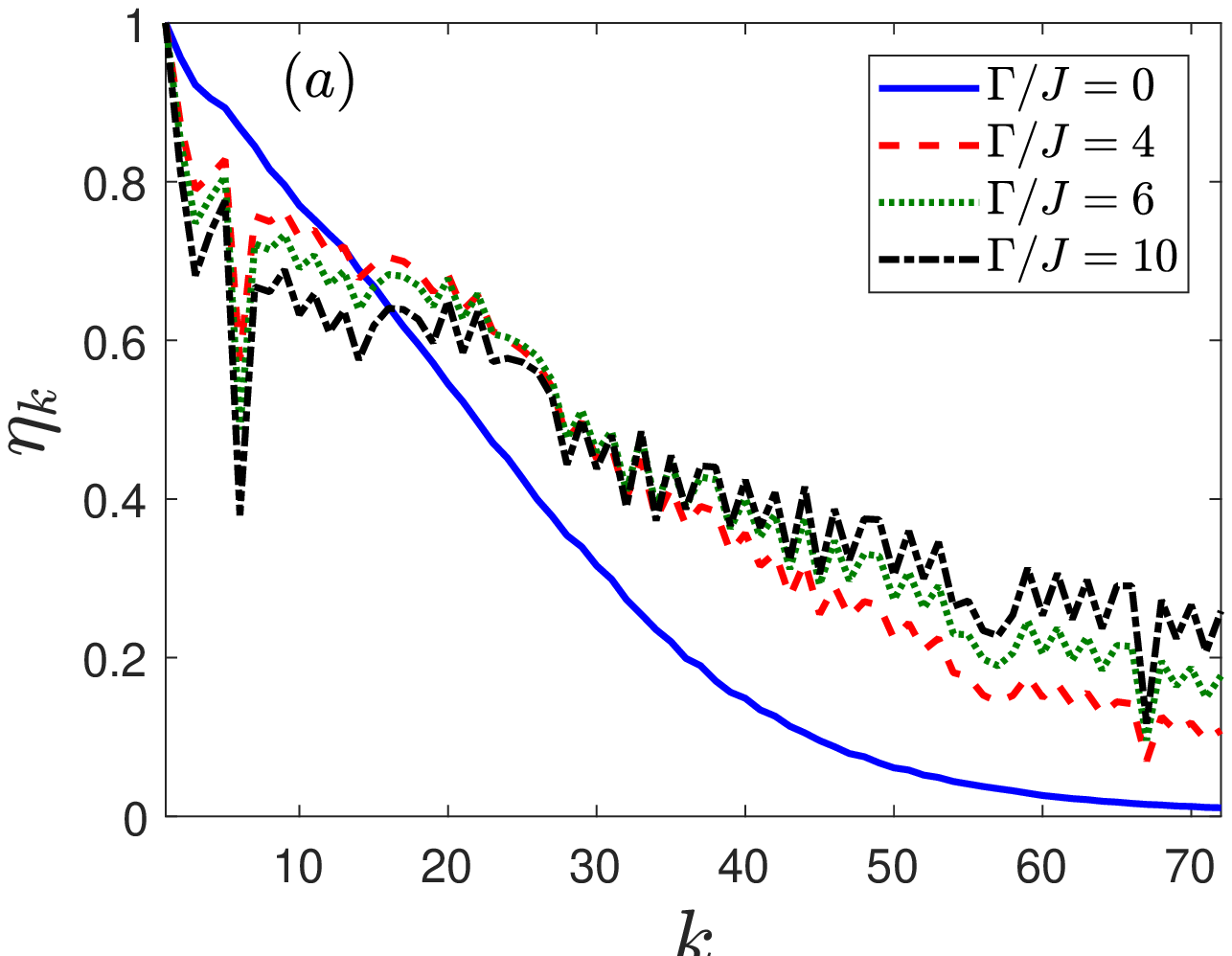}}
\subfigure{\label{N_10-T_10-varyGamma-totalqubitentropyred}
\includegraphics[width=8 cm]{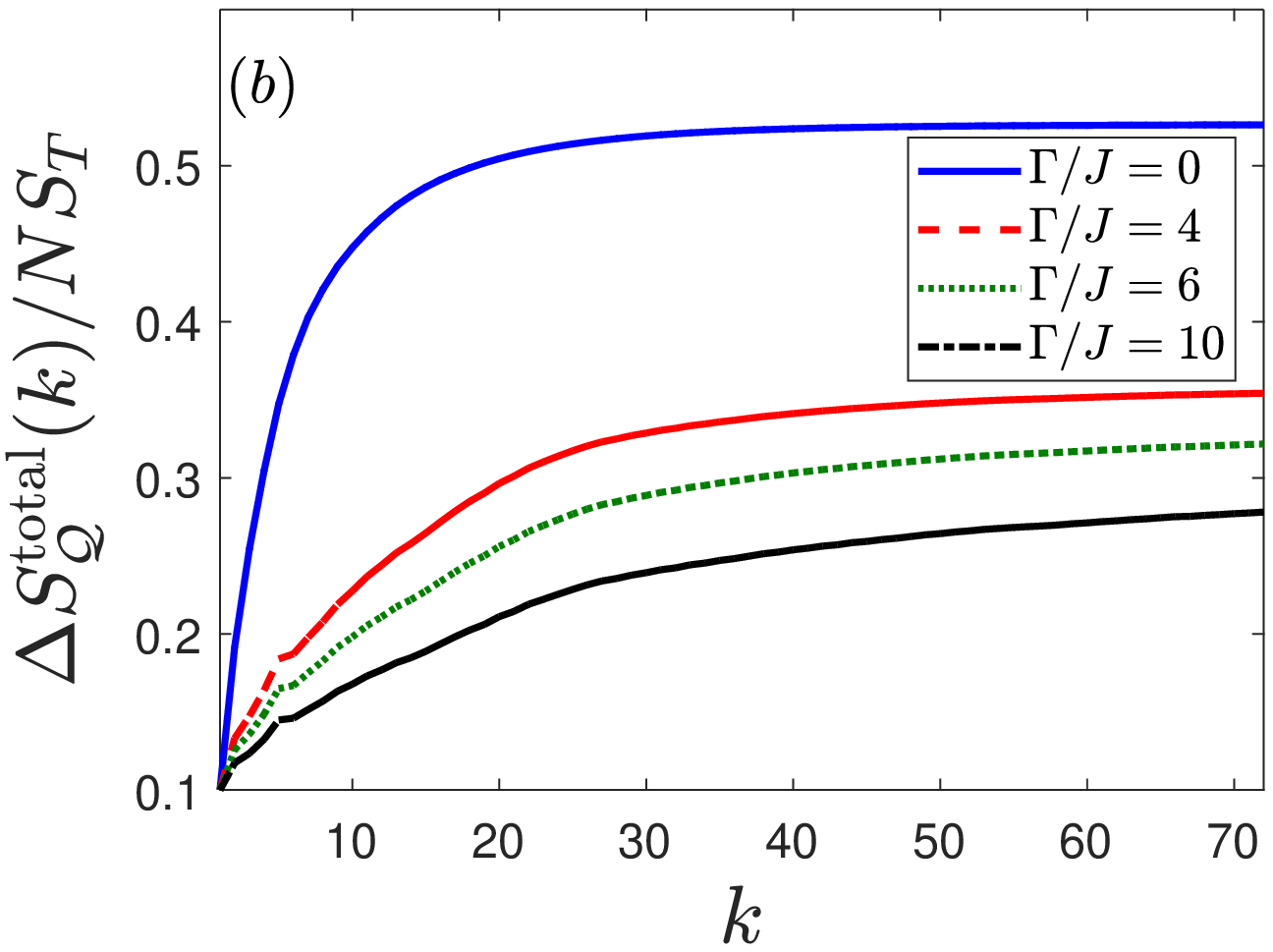}}
\subfigure{\label{N_10-T_10-varyGamma-chainentropy}
\includegraphics[width=8 cm]{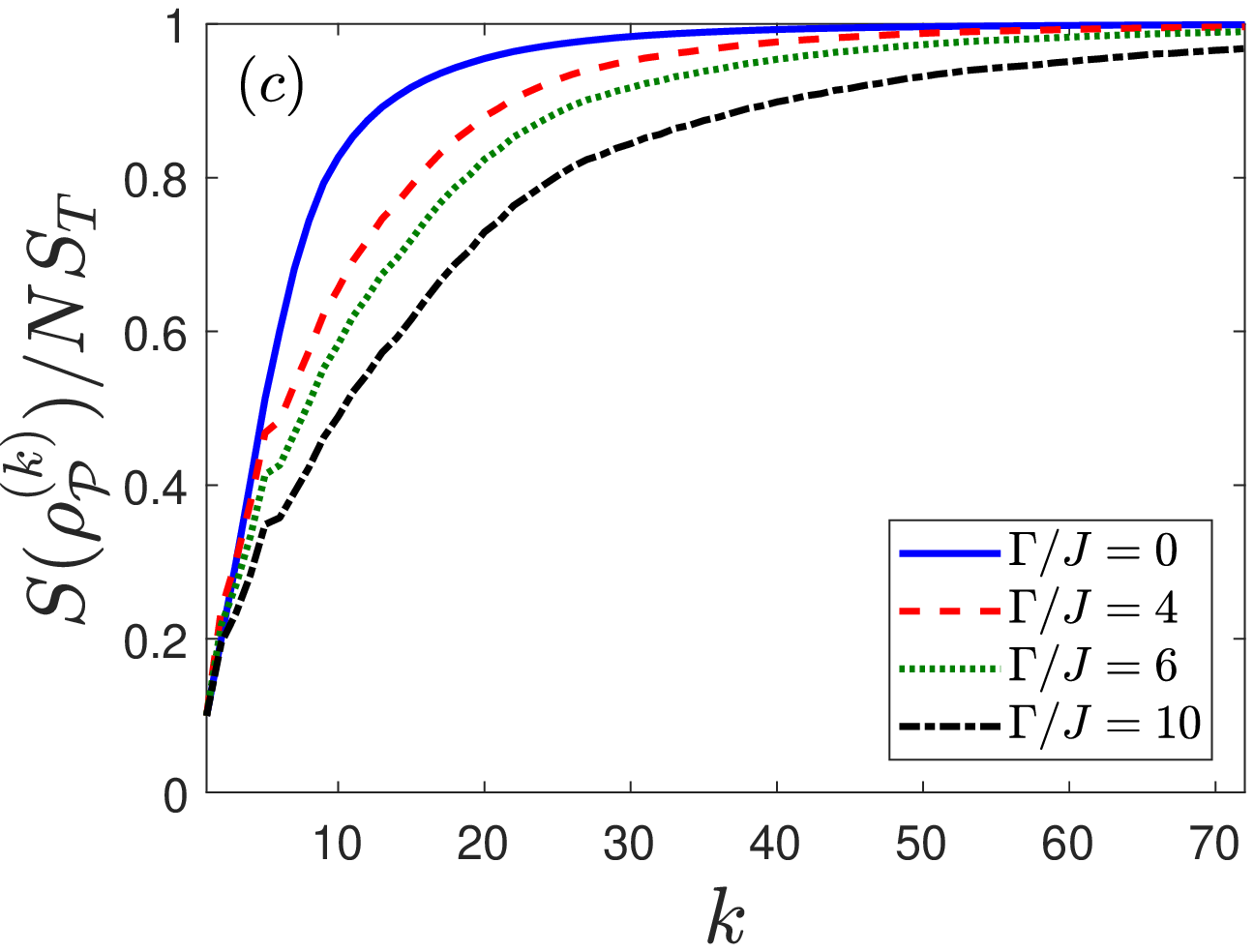}}
\subfigure{\label{N_10-T_10-varyGamma-Thermometry}
\includegraphics[width=8 cm]{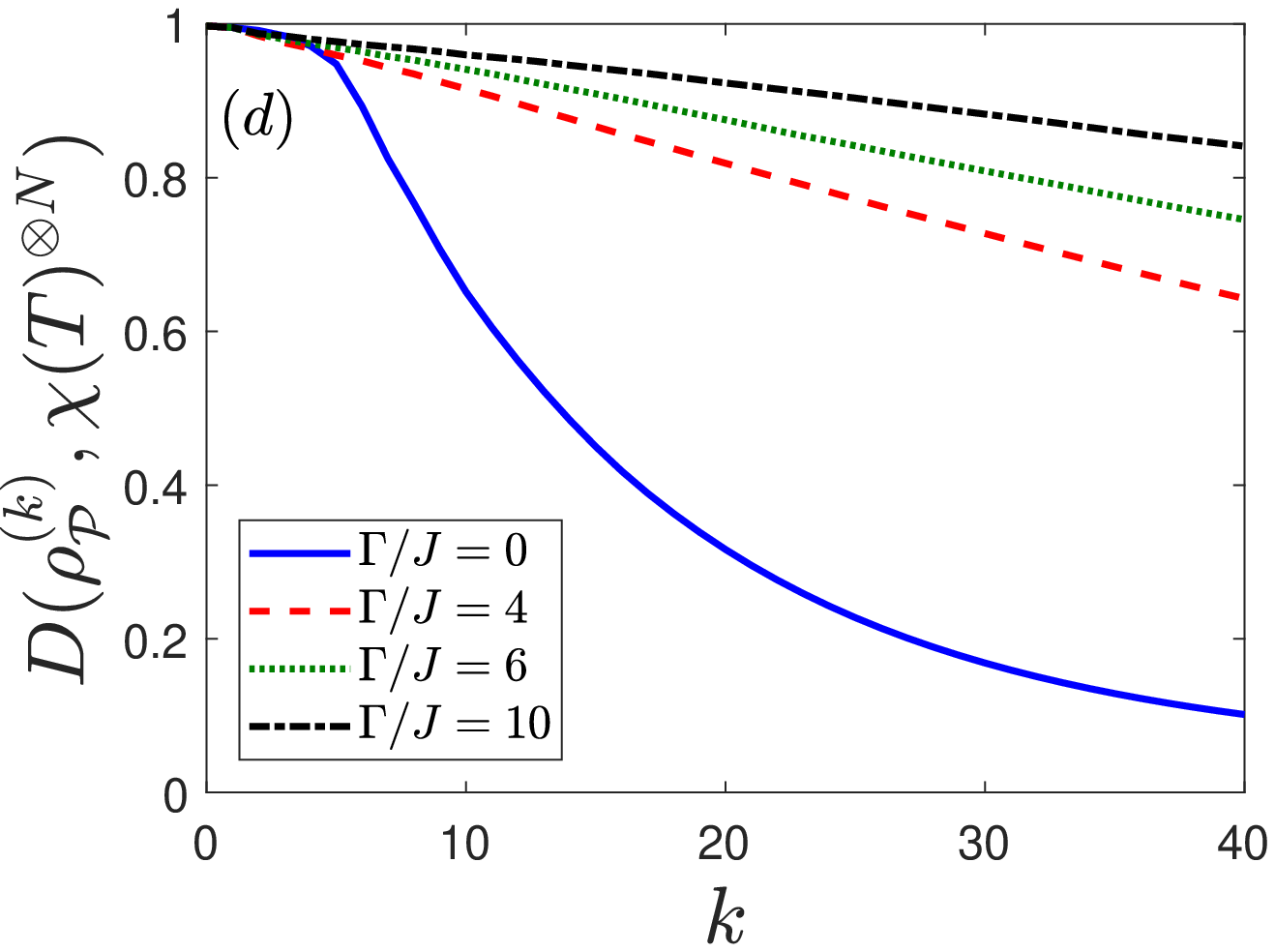}}
\caption{(a) - (c) show the performance of the cooling protocol with a Heisenberg spin chain probe in the presence of local dephasing of strength $\Gamma$. All waiting times $J\tau_k \in [0,N]$ are calculated for the ideal case with $N=10$ and $k_BT/\omega=5$. (a) shows the dependence of the cooling efficiency of the $k$\ts{th} thermal qubit, $\eta_k$, on the dephasing strength. (b) and (c) show, respectively, how the dephasing strength affects the total entropy reduction of the thermal qubits, $\Delta S\subw{\qq}^{\mathrm{total}}(k)$, and the entropy of the chain, $S(\rho\subw{\pp}^{(k)})$, after the $k$\ts{th} thermal qubit has been cooled.  $S_T$ is the entropy of the thermal state $\chi(T)$. (d) shows the effect of dephasing on pseudo-thermalization, where $D(\rho^{(k)}\subw{\pp}, \chi(T)^{\otimes N})$ is the trace distance between the state of the probe and the pseudo-thermal state, after the $k$\ts{th} thermal qubit has been cooled.  Here, we set $J\tau_k =1$.  } \label{dephasing}
\end{figure*}

\begin{figure*}
\subfigure{\label{N_10-T_10-unitary-realisticSWAP-efficiency}
\includegraphics[width = 8 cm]{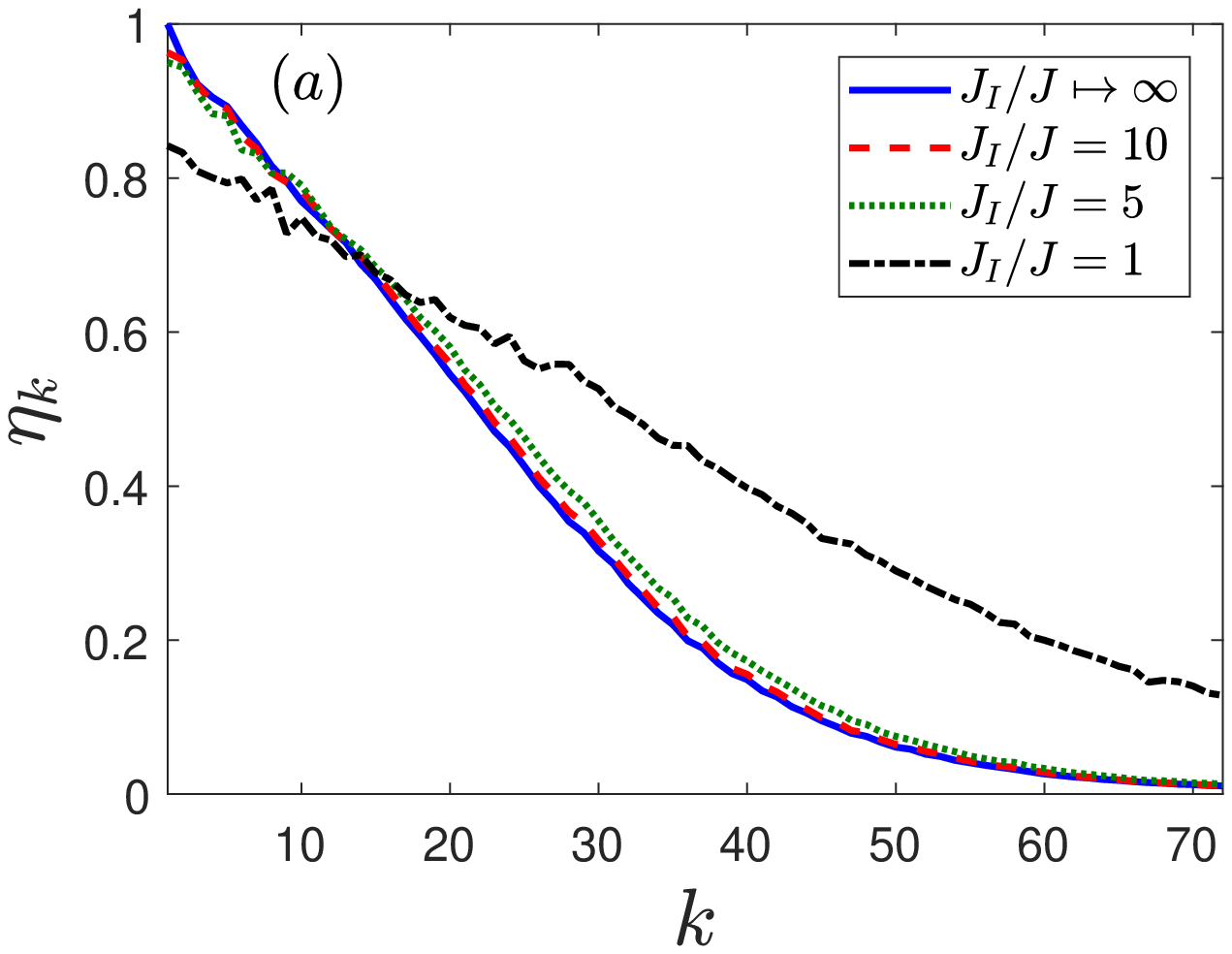}}
\subfigure{\label{N_10-T_10-unitary-realisticSWAP-totalqubitentropyred}
\includegraphics[width=8 cm]{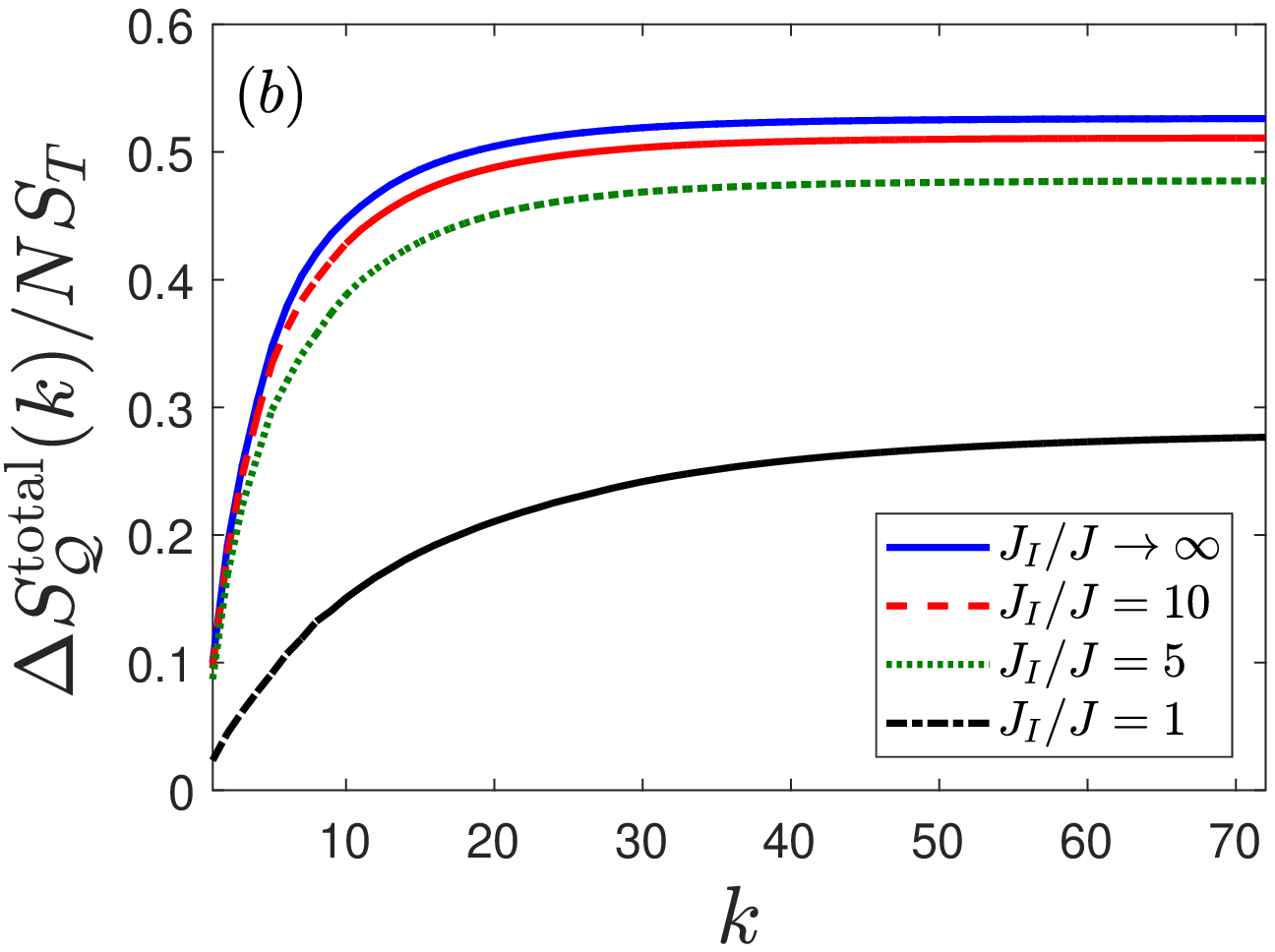}}
\subfigure{\label{N_10-T_10-unitary-realisticSWAP-chainentropy}
\includegraphics[width=8 cm]{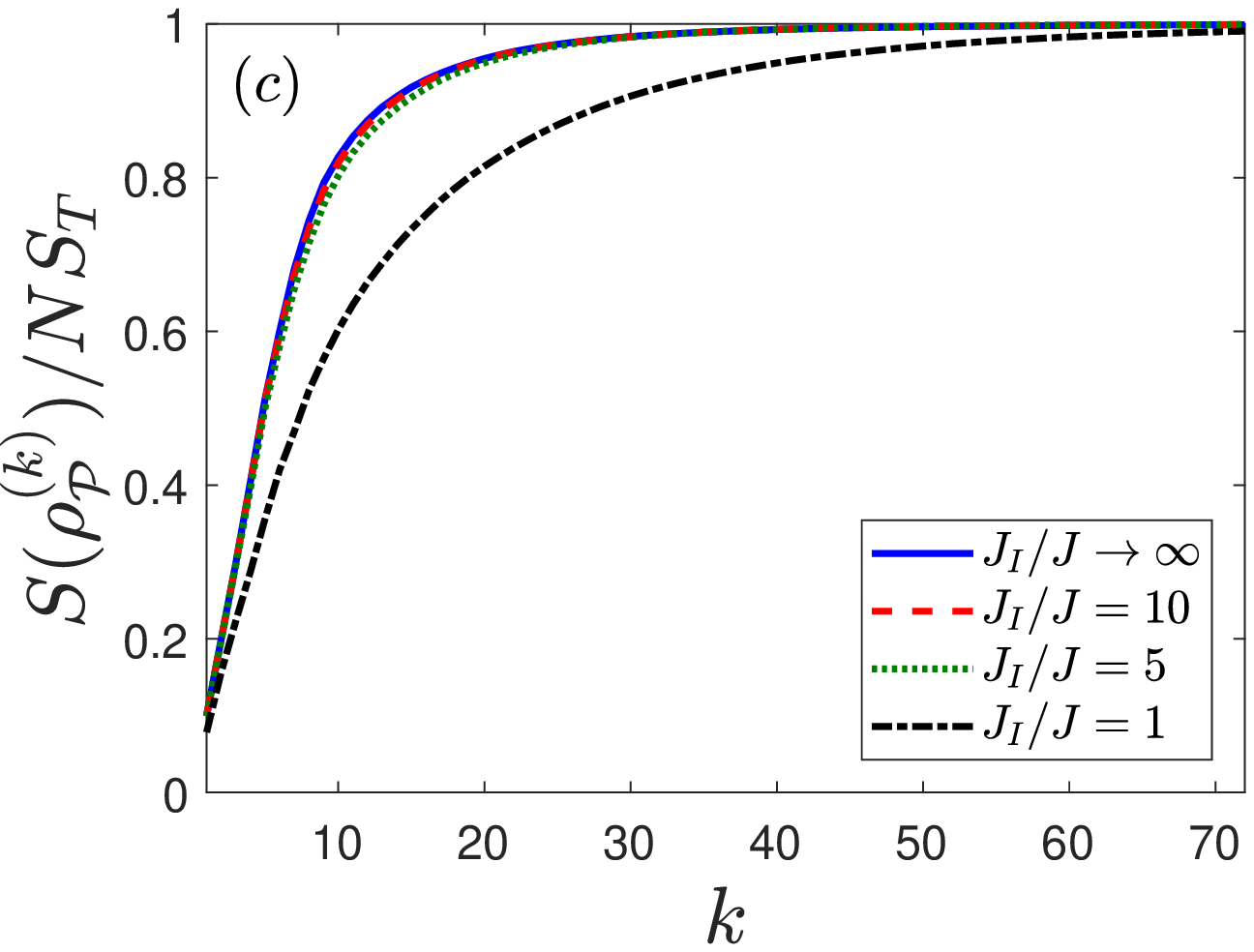}}
\subfigure{\label{N_10-T_10-unitary-realisticSWAP-Thermometry}
\includegraphics[width=8 cm]{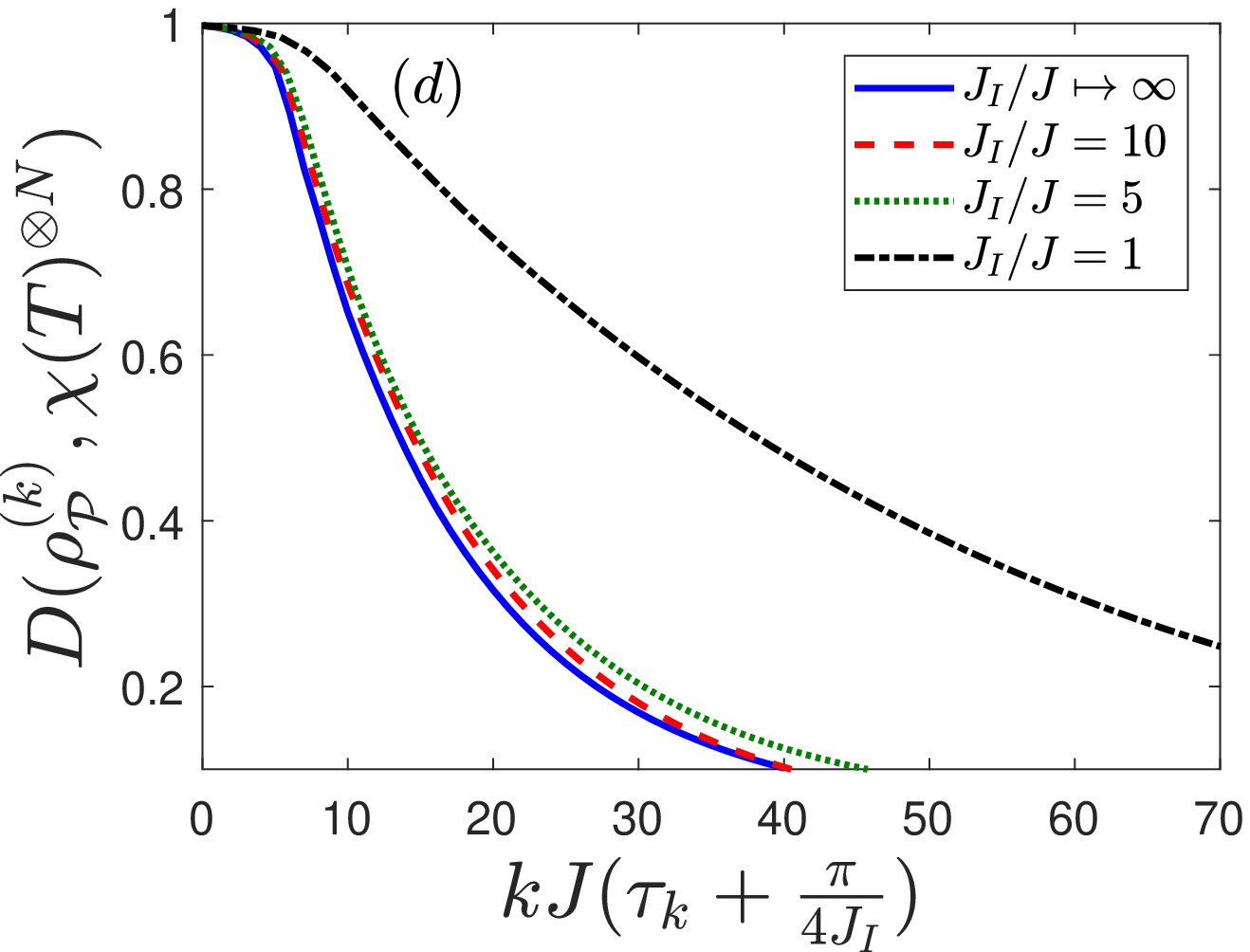}}
\caption{(a) - (c) show the performance of the cooling protocol with a Heisenberg spin chain probe, with the swaps effected by a time-dependent Heisenberg interaction of strength $J_I$. All waiting times $\tau_k$ are calculated for the ideal case with $N=10$ and   $k_BT=5$. (a) shows the dependence of the cooling efficiency of the $k$\ts{th} thermal qubit, $\eta_k$, on $J_I$. (b) and (c) show, respectively, how $J_I$ affects the total entropy reduction of the thermal qubits, $\Delta S\subw{\qq}^{\mathrm{total}}(k)$, and the entropy of the chain, $S(\rho\subw{\pp}^{(k)})$, after the $k$\ts{th} thermal qubit has been cooled.  $S_T$ is the entropy of the thermal state $\chi(T)$. (d) shows the effect of $J_I$ on the rate of pseudo-thermalization, where $D(\rho^{(k)}\subw{\pp}, \chi(T)^{\otimes N})$ is the trace distance between the state of the probe and the pseudo-thermal state, after the $k$\ts{th} thermal qubit has been cooled.  Here, we set $J\tau_k =1$.   }\label{realistic_swaps}
\end{figure*}

There are two imperfections in the system that we study here: (i) presence of dephasing on the probe; and (ii) an imperfect swap operation implemented by a time-controlled Heisenberg interaction of finite strength. As  \thmref{Cooling protocol always works} and \thmref{XXZ stationary only if} still apply in the case of imperfections, the cooling protocol will still function robustly, and the probe can still act as a thermometer. Moreover, because \thmref{total entropy reduction inequality} also applies, we know that the total entropy reduction will be bounded by the probe size.  However, the efficiency of the protocols may change. To analyze the effect of such imperfections quantitatively, we will 
numerically investigate a probe consisting of a Heisenberg spin chain of $N=10$ spins,  initialized to the state given by \eq{Heisenberg probe initial state}. The thermal qubits will be fixed to a temperature of $k_B T/ \omega = 5$, where $\omega$ is the spectral gap of the thermal qubit's Hamiltonian. Here, the thermal qubits will be very close to maximally mixed states, which is a good approximation for, say, dark spins in diamond at room temperature.

\subsubsection{Dephasing}

It is in general difficult to keep the probe  fully isolated and, thus, the free evolution will not be   unitary. To account for the interaction between the probe and its  environment,  we consider local dephasing with $\Gamma>0$ in \eq{master equation}. The swaps, however, will continue to be perfect and instantaneous.  For the cooling protocol,  the waiting times $\{\tau_k\}_k$ will still be calculated for the ideal case, i.e. $\Gamma=0$. This is because, in general, the value of $\Gamma$ is unknown. 

In \fig{N_10-T_10-varyGamma-efficiency} we plot the efficiency $\eta_k$ versus the step $k$ for different values of $\Gamma$. As the figure shows, while an increase in dephasing strength results in a decrease in cooling efficiency for the first few iterations, this is reversed at later stages.  \fig{N_10-T_10-varyGamma-totalqubitentropyred} shows that the total entropy reduction of the qubits, after stage $k$ of the cooling protocol, is reduced by dephasing.  In \fig{N_10-T_10-varyGamma-chainentropy}  we show how dephasing  affects the entropy of the probe during the cooling protocol. The probe's entropy increases monotonically as the protocol progresses, but increasing dephasing strength decreases the probe's entropy at any stage $k$. Conforming with \eq{entropy_bound_chain}, the  probe entropy is always larger than the total entropy reduction obtained on the thermal qubits. In \fig{N_10-T_10-varyGamma-Thermometry} we plot the trace distance between the state of the probe and the pseudo-thermal state $\chi(T)^{\otimes N}$, as a function of $k$.   Here, the time between consecutive swaps is fixed to $J\tau_k=1$. As this figure shows, increasing dephasing strength slows the rate of pseudo-thermalization.

\subsubsection{Partial swaps}

We now allow for the swap operation to be imperfect corresponding to a finite-duration interaction between the thermal qubit and probe spin.  An imperfect swap may be realized by the time-dependent Hamiltonian of the compound system of probe and thermal qubit,
\begin{equation}\label{qubit-chain Hamiltonian}
H_{\pp+\qq}(t):= H\subw{\pp} + H\subw{\qq} + H_{I}(t),
\end{equation}
with the interaction Hamiltonian
\begin{equation}
H_I(t) := f(t)J_I\bm{\sigma\subw{\qq}} \cdot \bm{\sigma_{1}}.
\end{equation}
Here, $\bm{\sigma\subw{\qq}}$ and $\bm{\sigma_1}$ are vectors of Pauli operators acting on the thermal qubit and the first spin of the probe, respectively, and $J_I$ is the interaction strength between these systems. In the absence of $H\subw{\pp}$ and $H\subw{\qq}$, this Hamiltonian would induce a swap operation (with irrelevant phase factors) if $f(t)=1$ for a period of $ \pi/(4J_I)$, and zero otherwise. To numerically simulate the imperfect swap, we extend \eq{master equation} to include the thermal qubit $\qq$, with the updated Hamiltonian of \eq{qubit-chain Hamiltonian}. This can then and integrated with the Runge-Kutta-Fehlberg method as before. 

To understand the effect of finite time-duration swap gates, we plot in \fig{N_10-T_10-unitary-realisticSWAP-efficiency} the cooling efficiency  as a function of the normalized interaction strength $J_I/J$. We set dephasing to zero and, as before, the waiting times $\{\tau_k\}_k$ are calculated assuming for the ideal case, i.e., instantaneous and perfect swaps. Similarly to the case of dephasing, while a decrease in $J_I/J$ results in a decrease in cooling efficiency for the first few iterations, this is reversed at later stages.  However, as shown in \fig{N_10-T_10-unitary-realisticSWAP-totalqubitentropyred}, the total entropy reduction of the qubits, after stage $k$ of the cooling protocol, is always less when $J_I/J$ decreases.  This means that decreasing $J_I/J$ always decreases the overall performance of the protocol. To see how the probe is affected by the strength of $J_I/J$, in \fig{N_10-T_10-unitary-realisticSWAP-chainentropy} we depict the entropy of the probe  after the $k$\ts{th} qubit has been cooled. The probe's entropy increases monotonically as the protocol progresses, but decreasing $J_I/J$ lowers the probe's entropy at any stage $k$. Again, conforming with \eq{entropy_bound_chain}, the entropy of the probe always exceeds the total entropy reduction obtained on the thermal qubits. Finally, in \fig{N_10-T_10-unitary-realisticSWAP-Thermometry} we plot the distance between the state of the probe and the pseudo-thermal state $\chi(T)^{\otimes N}$, as a function of $k$. The time between consecutive swaps is set to $J\tau_k=1$.
As the figure shows, while an increase in $J_I/J$ from unity to five significantly improves the rate of pseudo-thermalization for the present case of evolution times, further increases in $J_I/J$ have a much  less noticeable effect.

\begin{figure*}
\includegraphics[width= 18 cm]{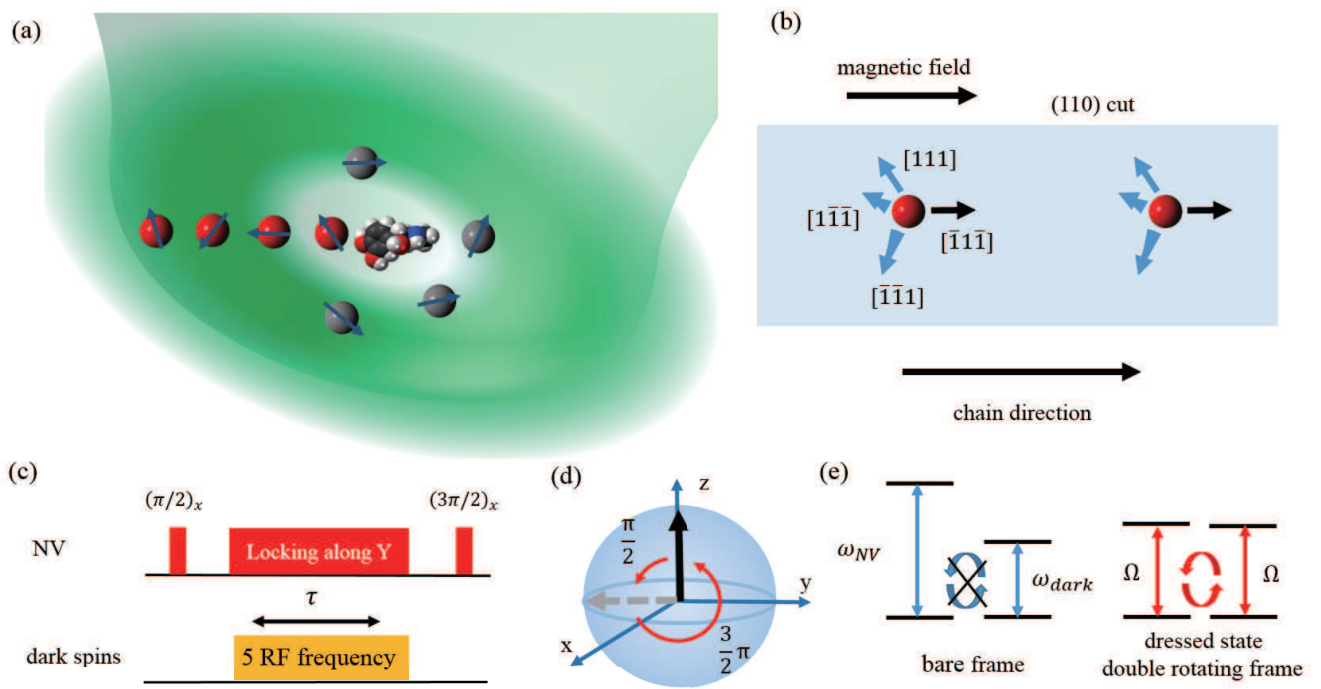}
\caption{(a) Dark spin cooling with an NV spin chain for single molecule NMR. Black spheres represent dark spins while red ones do NVs. (b) Orientation of NVs in the spin chain. To obtain a uniform interaction strength with optimal yield, (110)-cut diamond is presented. The magnetic field is aligned into the [$\bar{1}$1$\bar{1}$] direction (marked as black arrows), and the NV should be oriented into one of the other three directions -- [111],   [1$\bar{1}\bar{1}$], or [$\bar{1}\bar{1}$1] (marked as blue arrows). Nearest neighbor spins should have different directions. (c) Pulse sequences for the NV-dark spin interaction. For the dark spin, all 5 RF transitions are driven. For the spin-1 $^{14}$N hyperfine axis parallel with magnetic field, the hyperfine splitting is A$_{\parallel}$ = 114 MHz, while for the other three axes, A$_{\nparallel}$ = 90 MHz. (d) Bloch sphere representation of NV spin during pulse sequence. After $(\frac{\pi}{2})_x$ pulse, the spin is locked into the y-direction (marked as a gray dotted arrow). (e) Dressed state resonant coupling. In the laboratory frame, energy difference between two spins prohibits energy exchange (spin flip-flop). In the double rotating frame with dressed states, energy can be exchanged between the two. }
\end{figure*}

\section{Dark spin cooling with a Nitrogen Vacancy Spin Chain}

Electronic spins in diamond are promising to realize our proposal discussed in \sect{idealised model}. Here the thermal qubits we wish to cool are environmental dark spins \cite{DarkSpinStudy}, and the probe is a Heisenberg spin chain composed of nitrogen vacancy (NV) color centers. In the negative charge state, the NV$^-$ ground state constitutes a localized, spin-1 system with coherence times exceeding milliseconds even at room temperature \cite{DynamicDecoupling,IsoPuriNV}. Its spin states can be initialized, manipulated, and measured with optical and microwave fields. The combined advantages of NVs --- long coherence time, easy manipulation of spin states, and large gyromagnetic ratio (compared with nuclear spins) --- make it a good candidate for quantum sensing \cite{LukinSensing,YacobySensing,SingleSpinSensing}. Recent demonstrations on quantum sensing, such as paramagnetic centers in solids \cite{ParaSensing}, single protein molecules \cite{ProteinSensing}, and a few nuclear spins \cite{SingleSpinSensing}, have shown the potential. The location of NVs within the diamond can be controlled in a variety of ways, including localized delta-doped growth \cite{deltaDope}, targeted implantation through a focused ion beam \cite{FIB}, and nano-masked implantation \cite{mask1,mask2}. These fabrication techniques have demonstrated the possibility of constructing NV spin chains with spatial precision on the 10 nm scale, as required for the realization of our method. 

Aside from NV centers, diamond is host to many different dark spins \cite{DarkSpinStudy} --- dark in the sense that they are not fluorescent. In particular, the low conversion efficiency from implanted (or native) nitrogen atoms in the diamond lattice to NV centers ($_{\widetilde{~}}5\%$ \cite{NVImplant}) results in a large number of single-substitutional nitrogen defect centers (P1 centers) in the vicinity of NVs. These dark spins generally act as a spin bath, decohering the NV centers \cite{DarkDecoh}. However, some proximal spins can coherently interact with the NV centers \cite{NVImplant,QGateDark,DarkCoh}. If these proximal dark spins can be cooled down (initialized) efficiently,  the coherence time is extended, and even more,  they can serve as a quantum resource in environment-assisted sensing \cite{EnvAssSensing,cappellaro2012environment}. Through this method, one can gain improved sensitivity from both coherence time ($_{\widetilde{~}}1/\sqrt{T_2}$) and the number of spins ($_{\widetilde{~}}1/\sqrt{N}$ for standard quantum limit or $1/N$ for Heisenberg limit depending on the sensing scheme \cite{EnvAssSensing,cappellaro2012environment}). There have been many attempts for dark spin cooling, but the polarization has been much lower than that of the NV so far \cite{QGateDark,DarkCool}.

Here, we propose an efficient method for dark spin cooling, which uses an NV center spin chain as a probe. The NV center that is closest to the dark spins takes the role of the first spin in the probe in \sect{idealised model}. The benefit of the spin chain is that it provides a cold reservoir and cooling conduit that can be cooled in one region. One exemplary application is magnetic resonance detection of photosensitive molecules, as illustrated in Fig. 6(a). In such circumstances, while nearby NV centers are able to interact with the dark spins and target molecules, they cannot be initialized constantly with a strong optical field. A possible solution is to use a chain of NVs to initialize the NVs far from the molecule, and let the chain transfer polarization to the dark spins.

 From the cooling point of view, this can achieve much colder temperatures than can be realized with dissipative cooling. Even at 3 K, the thermal energy corresponds to $\sim0.25$ meV, or $60$ GHz, requiring a large static magnetic field. Also in this regime, dynamical control with pulse sequences is infeasible because it requires electronics with a high precision. Thus, these points necessitate additional cooling with a quantum probe (e.g. NV spin chain) that can be initialized faster and colder.

Although we do not address specific sensing schemes, as an example, the one proposed in \cite{EnvAssSensing,cappellaro2012environment} can be directly used in this setting. The NV interacting with dark spins is not accessible by optical fields by assumption. Because information is encoded as spin polarization in this case, one can transfer the polarization to the other end of the spin chain with the Heisenberg interaction used in cooling.

This spin chain cooling must meet several requirements: (i) initially, the NV spin chain should be cooled down, i.e, each NV must be cooled down with respect to its bare Hamiltonian; (ii) the dark spins should be decoupled from each other; (iii) a SWAP gate between the first NV and each individual dark spin should be applied when needed; and (iv) the spins in the chain should have a nearest-neighbor Heisenberg interaction. NVs can be optically initialized (polarized) with high fidelity \cite{NVInitialization}. A doughnut beam initializes NVs far from the photo-sensitive molecules, and the Heisenberg interaction distributes the polarization to the whole chain before transferring it to the dark spins. This automatically satisfies condition (i). In subsequent subsections, we will investigate a way to implement (ii) $\sim$ (iv).

\begin{figure*}
\includegraphics[width = 16 cm]{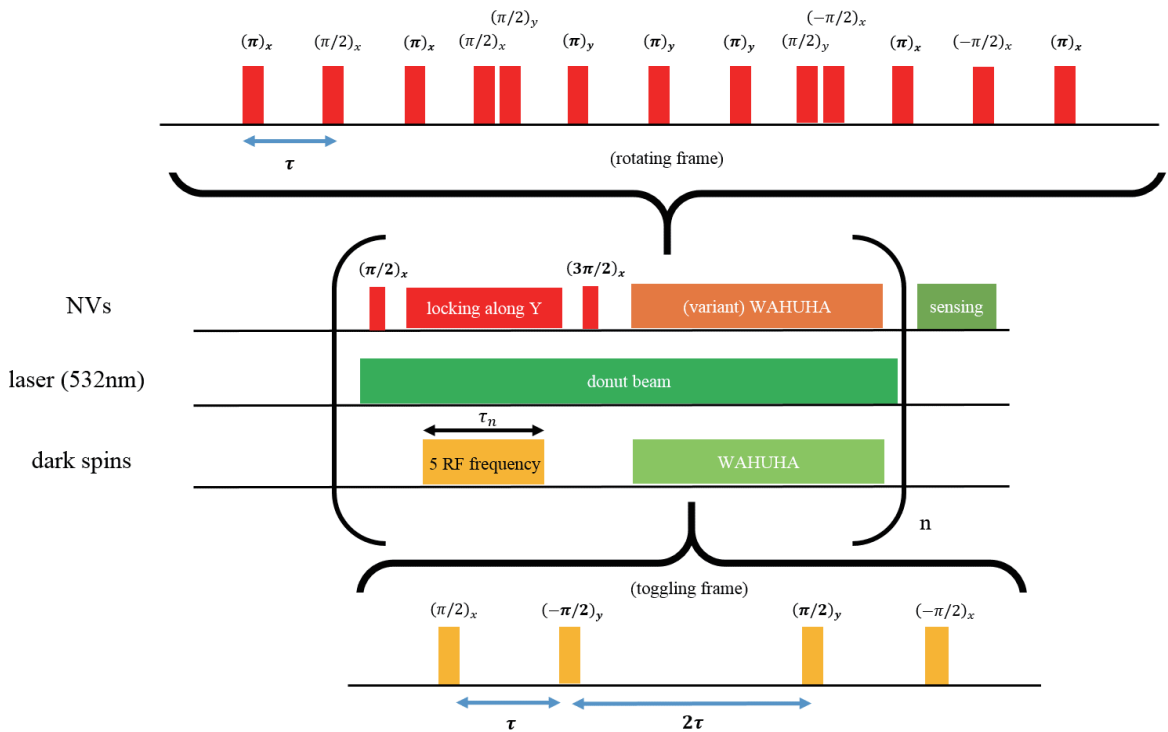}
\caption {Variant$-$WAHUHA sequence for Heisenberg interaction and the whole pulse sequence. Cooling dark spins and thermalizing spin chains are alternated multiple times. Environment-assisted sensing can be applied after this cooling step. Note that variant$-$WAHUHA is not in a toggling frame while WAHUHA is in a toggling frame.}
\label{WAHUHA}
\end{figure*}

\subsection{Probe NV - dark spin interaction}
	Several types of dark spins in diamond have been extensively studied \cite{DarkSpinStudy}. Here, we focus on the P1 centers closely related with the NV center implantation process \cite{DarkCoh}, but the physics is the same for other spin species in diamond or even in the molecule to be sensed.

The magnetic dipolar interaction between NV and P1 centers is captured by the interaction Hamiltonian \cite{DressedCP},
\begin{equation}
\begin{split}
H_{dip} &= D_{NV,P1}S^z_{NV}\otimes S^z_{P1} \\
&\xrightarrow{m_s=1,0} D_{NV,P1}\sigma^z_{NV}\otimes \sigma^z_{P1} \\ 
&\xrightarrow{H.H.} \frac{D_{NV,P1}}{4}(\sigma^+_{NV}\otimes \sigma^-_{P1} + \sigma^-_{NV}\otimes \sigma^+_{P1})
\end{split}
\end{equation}
where $S^z_{NV}$ and $S^z_{P1}$ are the electronic spin operators of NV centers and P1 centers respectively; $\sigma^{\alpha}$ with $\alpha\in \{x,y,z\}$ are Pauli operators on a pseudo spin-$\frac{1}{2}$ subspace spanned by $\ket{m_s=1}$ and $\ket{m_s=0}$, for the case of NVs; $\sigma^{+,-}$ are spin ladder operators; H.H. represents the Hartmann$-$Hahn condition \cite{HHCP}; and $D_{NV,P1}= q\frac{\mu_0\gamma_1\gamma_2\hbar^2}{4\pi}$ (Appendix \ref{NVDS}). Because of the energetic detuning between NV and P1 centers, terms related to spin flip-flops are suppressed in the secular approximation, resulting in an Ising interaction \cite{FlorianRTCP}  as in the first line. However, by locking the NV and P1 centers in the transverse direction with the same Rabi frequency (Hartmann-Hahn matching), a flip-flop interaction, written as $\sigma^+_{NV}\otimes \sigma^-_{P1} + \sigma^-_{NV}\otimes \sigma^+_{P1}$, can be generated. In the $\sigma^z$ basis, this flip-flop operation is equivalent to the SWAP gate up to an irrelevant phase factor.

This flip-flop interaction is referred to as the Hartmann-Hahn cross-polarization (HHCP), and has been studied in the NMR context \cite{slichter2013principles}. Recently, HHCP has also been demonstrated with NV centers and P1 centers in diamond \cite{DressedCP}. Figure 6(c) shows the pulse sequence for HHCP of the probe NV-dark spin coupling. The probe NV is locked in the y-direction (Fig.~6(d)), while each dark spin (thermal qubit) is driven at the same time with the same Rabi frequency for all hyperfine levels. In the dressed state double rotating frame, NV centers and dark spins have the same energy splitting, making polarization transfer possible without violating energy conservation, as would be the case in the bare frame (Fig.~6(e)).

More quantitatively, Ising interaction is converted to $\sigma^x_{NV}\otimes\sigma^x_{P1}$ in the toggling frame. Symmetrization and anti-symmetrization reexpress it as $(\sigma^x_{NV}\otimes\sigma^x_{P1}+\sigma^y_{NV}\otimes\sigma^y_{P1})/2+(\sigma^x_{NV}\otimes\sigma^x_{P1}-\sigma^y_{NV}\otimes\sigma^y_{P1})/2$ (Appendix~\ref{NVNV}). The second term involves non-energy conserving terms that can be eliminated by the rotating wave approximation~\cite{RWA}. As a result, the effective Hamiltonian of the NV-P1 center interaction has the form of Eq. (III.1).

Dressed-state resonant coupling has an advantage in that it is not sensitive to the intrinsic spin level energy. At first, all hyperfine levels of different species of dark spins can be driven, without regard to their associated nuclear spin state. In addition, the interaction can be easily switched on or off by locking or not locking spins. Demonstrated coupling strength (13 MHz) \cite{NVImplant} is more than a thousand times larger than the dephasing rate of NVs ( $_{\widetilde{~}} 1$ kHz) implying that the dephasing of the probe qubit is negligible during this SWAP operation.

\subsection{NV Heisenberg spin chain}\label{NV chain}

	The interaction between NVs in the spin chain depends on the external magnetic field and the orientation of NVs within the chain, with many possibilities available \cite{LukinTherm}. In contrast, the required interaction for our method is the spin$-\frac{1}{2}$ Heisenberg interaction. We propose a novel way to form a Heisenberg interaction between NVs in the subspace spanned by $\ket{m_s=1}$ and $\ket{m_s=0}$. Here, we consider the situation where NVs are oriented in at most three different crystal orientations of (110) diamond, and that we can make the $\ket{m_s=1}$ ground state energies of all NVs degenerate while splitting $\ket{m_s=1}$ from the $\ket{m_s=-1}$ state. This can be achieved by the Zeeman effect by applying a uniform magnetic field in the fourth, [$\bar{1}$1$\bar{1}$] direction (Fig.~6(b)). As we will see later on in this section, this configuration is not a necessary condition, because one could achieve an effective  Heisenberg interaction only with an Ising interaction under the proposed pulse sequence. However, this configuration produces larger interaction strengths between NVs and needs less microwave electronics  because all the NVs in the chain are degenerate.

	The nearest neighbor NV-NV interaction Hamiltonian after removing the non-energy conserving terms (Appendix~\ref{NVNV}), is
  	
\begin{equation}
\begin{split}
H_{int} = -\frac{J_0}{r^3}\big[2g^+(\sigma_{1}^x\otimes \sigma_{2}^x + \sigma_{1}^y\otimes \sigma_{2}^y)
\\+2ih^-(\sigma_{1}^x\otimes \sigma_{2}^y - \sigma_{1}^y \otimes \sigma_{2}^x) + q \sigma_{1}^z\otimes \sigma_{2}^z]
\end{split}
\end{equation}

  First, we ignore the $(\sigma_{1}^x\otimes \sigma_{2}^y - \sigma_{1}^y \otimes \sigma_{2}^x)$ term, and focus on the terms $\sigma_{1}^\alpha\otimes \sigma_{2}^\alpha$ for $\alpha =x,y,z$.  Except in rare accidental configuration of NVs, which can be avoided in a implantation process, $4g^++q\neq0$. As a result, globally rotating spins will feel an averaged isotropic interaction that is not canceled out. However, since $\sigma_{1}^x\otimes \sigma_{2}^x$, $\sigma_{1}^y\otimes \sigma_{2}^y$, and $\sigma_{1}^z\otimes \sigma_{2}^z$ are not mutually commuting, simply rotating spins will not result in the desired Heisenberg interaction. We can use Trotter-Suzuki decomposition \cite{TrotterSuzuki} to approximate the Heisenberg Hamiltonian, and to minimize errors in the given order. 1\ts{st} order Trotterization results in the WAHUHA pulse sequence \cite{WAHUHA} in effective Hamiltonian theory, widely used to nullify homonuclear interactions in solid state NMR. However, in the case of NVs, the application of the WAHUHA sequence results in an effective Heisenberg chain. This discrepancy comes from the difference of interaction Hamiltonians. NVs have a different Hamiltonian with homonuclear dipolar interaction $(\propto \sigma_{1}^x\otimes \sigma_{2}^x + \sigma_{1}^y\otimes \sigma_{2}^y - 2\sigma_{1}^z\otimes \sigma_{2}^z)$ because we only use a two-dimensional subspace of the spin-$1$ Hilbert space (which is 3-dimensional) and express the interaction with pseudo-spin-$\frac{1}{2}$ Pauli operators. The resulting effective Hamiltonian (Appendix \ref{NVNV}) has the form of

\begin{equation}
\begin{split}
H_{eff} = -\frac{J_0}{3r^3}\big(4g^++q\big)\sum_{\alpha \in \{x,y,z\}}\sigma_{1}^\alpha\otimes \sigma_{2}^\alpha + \mathcal{O}(\tau^2).
\end{split}
\end{equation}

The previously ignored term, $2ih^-(\sigma_{1}^x\otimes \sigma_{2}^y - \sigma_{1}^y\otimes \sigma_{2}^x)$ can be canceled in the context of WAHUHA. Adding a $\pi$ pulse in any direction does not change the Heisenberg interaction terms because two spins are flipped together. However, when a $\pi$ pulse is applied in one of the transversal directions to the spin, the $h^-(\sigma_{1}^x\otimes \sigma_{2}^y - \sigma_{1}^y\otimes \sigma_{2}^x)$ terms change the sign. Therefore, when the evolution time of the two interactions are matched, they cancel each other. This $\pi$ pulse also serves to dynamically decouple NVs from any slow-moving bath spins, reducing dephasing. These bath spins can be treated as classical noise sources with mean-field approach~\cite{wang2012comparison}. Furthermore, It has been shown that a reasonable number of decoupling pulses (n=256) can increase coherence time approaching the phononic relaxation time ($T_2\sim0.5T_1$), even at room temperature~\cite{bar2013solid}. Thus, in this limit, system can be treated as experiencing a Markovian noise process justifying the model in the Section~\ref{The set-up}. Experimentally, high fidelity gates have been demonstrated such that a 190 nsec inter-pulse delay with more than 1000 pulses does not heat up the spins in an isotopically purified diamond sample with P1 center density of $\sim5 ppb$ ~\cite{ProteinSensing}. The resulting pulse sequence with dark spin cooling is described in~\fig{WAHUHA}. Here, the original version of the WAHUHA is also applied to dark spins to prevent mutual interaction, resulting in a central-spin model.  This was thoroughly studied in~\cite{cappellaro2012environment} with numerical simulations, and it was shown that WAHUHA can efficiently change the dynamics of NVs and dark spins. Especially, increased coherence times of an NV center are observed, which is desirable in our cooling protocol as shown in Section (II.D.1).

Note that the NVs in the chain are interacting with a Heisenberg interaction regardless of their orientations and distance between them. Because our protocol relies on the population transfer between spins in the chain, a chain with randomly oriented and separated NVs can work as a quantum probe. One possible concern can be a slow thermalization in disordered spin systems (many-body localization)~\cite{serbyn2013universal,bardarson2012unbounded}. However, cooling one end and transferring polarization to the other end are possible even in the case of medium localization, due to the small size of the system (N=6).

In spite of the robustness of the protocol, we give an example configuration that can achieve an NV spin chain with a uniform Heisenberg interaction strength (Fig.~6(b)), that has been assumed in Sec.~II. \color{black} Here, we assume (110)-cut diamond with a static magnetic field in the $[\bar{1}1\bar{1}]$ direction. The other three orientations of NVs -- $[111],[1\bar{1}\bar{1}],[\bar{1}\bar{1}1]$ -- are equivalent to each other in the sense that alternated NV orientations result in equal coupling strengths. The probability of creating a chain of $N$ spins satisfying these properties is $P(N) = \frac{3}{4}\cdot(\frac{1}{2})^{N-1}$ assuming randomly oriented NVs resulting from implantation process. Considering a separation of 25 nm between NVs, N=6 gives a spin chain of 150 nm length with $_{\widetilde{~}}2.3\%$ yield, which could allow enough isolation to the doughnut beam \cite{NVSTED1,NVSTED2}.  Recent improvements in creating long 1D spin chains~\cite{jakobi2016efficient} and dark spin-NV coupling~\cite{pezzagna2010creation, LukinTherm} show the proposed system is feasible.  The $T_1$ time of NVs can be long, $_{\widetilde{~}} 7.5$ ms, even at room temperature ~\cite{NVT1}, while the coupling strength of this configuration reaches $_{\widetilde{~}}$12.4 kHz. This allows for $_{\widetilde{~}}$90 repetitive population transfers between nearest-neighbor NVs before relaxation.  Large $T_1$ time of P1 centers has been demonstrated at low temperature ($_{\widetilde{~}} 8.3$ s at 2 K \cite{takahashi2008quenching}), implying that many cycles of cooling NVs and transferring polarizations are possible. 

\section{Conclusions}\label{Conclusions}

We proposed a method of refrigeration and thermometry of a collection of thermal qubits, each at temperature $T$, with a quantum many-body-probe composed of a network of interacting spins. We showed that minimal control is required; the protocol will succeed with just an imperfect swap gate between the target spin of the probe and the thermal qubit we wish to cool.  Moreover, we analytically proved that the probe is a robust refrigerator if it is initialized in an appropriate state; the thermal qubits will be cooled, or left at the same temperature, even in the presence of dephasing or with imperfect swap gates. Additionally, we showed that this many-body-probe can also be used as a quantum thermometer, providing an estimate of the absolute temperature of the thermal qubits. We numerically investigated a simple example of the probe -- a Heisenberg spin chain -- and quantitatively analyzed how the cooling efficiency is affected by the size of the probe, presence of dephasing, and the fidelity of the swap gate between the probe and the thermal qubit. As the simulation with dephasing demonstrates, coherent dynamics improves the efficiency of the probe, serving a critical role in extracting entropy. 

We considered an exemplary implementation using solid-state spin qubits, specifically nitrogen vacancy (NV) centers in diamond. Here, the probe can be constructed as a spin chain of NV centers, which can be used to efficiently cool down dark spins using a quantum refrigeration scheme based on a novel pulse sequence. Here, the intra-NV interaction strengths are typically $J\approx 10$ kHz, while the NV-dark spin interaction strengths are  $J_I \approx 1$ MHz. Consequently, as suggested by \fig{realistic_swaps}, such values are a good approximation for the ideal protocol, where the swap gates are perfect and instantaneous. 

This system is useful for environment-assisted quantum sensing, especially when the target is a photo-sensitive molecules such as a protein. Overall, our proposal for a low-control and robust quantum refrigerator opens new possibilities for low-entropy quantum-state preparation, useful for quantum metrology, quantum computation and for studying many-body quantum thermodynamics.

\acknowledgments
M.H.M. and Y.O. are grateful for support from Funda\c{c}\~{a}o para a Ci\^{e}ncia e a Tecnologia (Portugal), namely through programmes PTDC/POPH/POCH and projects UID/EEA/50008/2013, IT/QuSim, IT/QuNet, ProQuNet, partially funded by EU FEDER, from the EU FP7 project PAPETS (GA 323901), and from the JTF project NQN (ID 60478). Furthermore, M.H.M. acknowledges financial support from the EU FP7 Marie Curie Fellowship (GA 628912). H.C. was supported in part by a Samsung Scholarship and the Air Force Office of Scientific Research (AFOSR) MURI on Optimal Quantum Measurements and State Verification. M.E.T. was supported in part by the Air Force Office of Scientific Research (AFOSR) MURI (FA9550-14-1-0052), the AFOSR Presidential Early Career Award (supervised by Gernot Pomrenke), the Army Research office MURI biological transduction program, and the Office of Naval Research (N00014-13-1-0316). A.B. thanks EPSRC for the grant EP/K004077/1. D.E. acknowledges support from the Army Research Laboratory Center for Distributed Quantum Information. 

M.H.M. and H.C. have contributed equally to this work.

\bibliography{references}

\appendix

\section{Sufficient conditions ensuring that the thermal qubits are always cooled}\label{Cooling protocol always works appendix}

Here we demonstrate the conditions that need to be satisfied by the initial state of the probe, and its free evolution dynamics, so as to ensure that it will always cool the thermal qubits. To this end, we first give some useful definitions. 

\

\begin{defn}
If a qubit is diagonal with respect to the eigenbasis of $\sigma^z$, we refer to it as $\sigma^z$-diagonal.
\end{defn}

\

\begin{defn}
If every eigenstate of a quantum state $\rho$ has a Schmidt decomposition with respect to the eigenbasis of $\sigma^z$, i.e. $\{\ket{0},\ket{1}\}$, we refer to it as $\sigma^z$-Schmidt decomposable. 
\end{defn}

\

\begin{defn}
If a quantum channel (completely positive, trace preserving map) describing the time-evolution of a system composed of $N$ spin-half systems for a period of $\tau>0$, $\e_\tau$, satisfies 
\begin{align}
\sum_{n=1}^N \tr[\sigma^z_n \rho] = \sum_{n=1}^N \tr[\sigma^z_n \e_\tau(\rho)]
\end{align}
for all states $\rho$, we refer to it as $\sigma^z$-excitation conserving. 
\end{defn}

Now we prove the conditions under which the reduced state of every spin in an $N$-partite system will be $\sigma^z$-diagonal, which is a necessary condition for them to be thermal with respect to the $\sigma^z$ Hamiltonian. 

\

\begin{lem}\label{sigma-z diagonal}
Let a quantum system composed of $N$ spin-half systems be prepared in a state 
\begin{equation}
\rho = \bigoplus_{l=0}^N \tilde \rho_l,
\end{equation}
where each $\tilde \rho_l$ is a subnormalised state on the subspace containing $l$ excitations of $\sigma^z$, i.e., $\h_l$. 
Let every $\tilde \rho_l$ be $\sigma^z$-Schmidt decomposable, and let the system evolve according to a quantum channel $\e_\tau$ that is $\sigma^z$-excitation conserving. Then the reduced state of every spin, at all times $\tau > 0$, will be $\sigma^z$-diagonal. Furthermore, the only components of $\rho$ that contribute to the reduced state of any given spin are the diagonal elements with respect to the $\{\ket{0},\ket{1}\}^{\otimes N}$ basis.
\end{lem}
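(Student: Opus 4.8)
The plan is to organise the whole argument around the grading of operators by total excitation and to show that the free evolution respects it. I would introduce the total excitation operator $\hat m := \sum_{n=1}^N \tfrac12(\one-\sigma^z_n)$, whose integer eigenspaces are precisely the sectors $\h_l$, and grade the operator space by the eigenvalue $d$ of the superoperator $\mathrm{ad}_{\hat m} := [\hat m,\,\cdot\,]$. The hypothesis $\rho=\bigoplus_l\tilde\rho_l$ says exactly that $\rho$ has grade $0$, i.e. $[\hat m,\rho]=0$. (The $\sigma^z$-Schmidt-decomposability hypothesis is then automatic: any eigenvector of a $\tilde\rho_l$ lives in a single sector $\h_l$, so across any spin $n$ it reads $\ket0_n\ket{\phi_0}+\ket1_n\ket{\phi_1}$ with $\langle\phi_0|\phi_1\rangle=0$, a Schmidt form with Schmidt basis $\{\ket0,\ket1\}$; I would not need it as an independent input.) The proof then splits into two claims: (a) $\e_\tau$ preserves this grading, so $\e_\tau(\rho)$ is again block-diagonal; and (b) any block-diagonal state has $\sigma^z$-diagonal single-spin reductions that are fixed by its populations alone.

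For step (a) I would read $\sigma^z$-excitation conservation as the statement that $\e_\tau$ commutes with the one-parameter symmetry $\rho\mapsto e^{-\imag\theta\hat m}\rho\,e^{\imag\theta\hat m}$, equivalently that the generator $\louv$ of \eq{master equation} commutes with $\mathrm{ad}_{\hat m}$. This I would verify termwise: $H\subw{\pp}$ commutes with $\hat m$ because its exchange terms $\sigma^x_n\sigma^x_{n+1}+\sigma^y_n\sigma^y_{n+1}$ merely hop an excitation while the $\sigma^z_n\sigma^z_{n+1}$ terms are diagonal, and each dephasing jump operator $\sigma^z_n$ commutes with $\hat m$; hence $\mathrm{ad}_{\hat m}$ commutes with $\louv$, and therefore with $\e_\tau=e^{\tau\louv}$. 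Consequently $\e_\tau$ maps each grade-$d$ subspace into itself, and in particular sends the grade-$0$ state $\rho$ to a grade-$0$, i.e. block-diagonal, state $\e_\tau(\rho)$.

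For step (b) I would compute the reduced state $\sigma_n$ of spin $n$ for an arbitrary block-diagonal $\sigma$ directly from $\bra a\sigma_n\ket b=\sum_{\vec z}\bra{a,\vec z}\sigma\ket{b,\vec z}$, where $\vec z$ ranges over the remaining $N-1$ spins. A term survives only if the strings $a\vec z$ and $b\vec z$ carry equal total excitation; since they differ only on spin $n$, this forces $a=b$, so $\sigma_n$ has no off-diagonal part and is $\sigma^z$-diagonal. Moreover each diagonal entry $\bra a\sigma_n\ket a=\sum_{\vec z}\bra{a,\vec z}\sigma\ket{a,\vec z}$ is a sum of populations of $\sigma$, so only its computational-basis-diagonal elements contribute. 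Taking $\sigma=\e_\tau(\rho)$, whose block-diagonality was shown in (a), yields both conclusions of the lemma at every $\tau>0$.

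The hard part — indeed the crux — is the justification in (a) that excitation conservation forces grading preservation. The definition quoted in the text fixes only the mean, $\tr[\hat m\,\e_\tau(\rho)]=\tr[\hat m\rho]$, which is strictly weaker than covariance and is in fact not enough even for the diagonality claim: the measure-and-prepare channel that sends $\pr{01}\mapsto\pr{++}$ (with $\ket+:=(\ket0+\ket1)/\sqrt2$) while leaving the other computational-basis populations alone conserves $\langle\hat m\rangle$ yet produces a non-$\sigma^z$-diagonal single-spin reduction. I would therefore take covariance, $[\louv,\mathrm{ad}_{\hat m}]=0$, as the operative meaning of ``excitation conserving'' — which the explicit generator manifestly obeys — rather than the bare mean-conservation statement. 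I would also flag a wording point in the ``furthermore'': the clean and correct version is that only the diagonal entries of the block-diagonal probe state \emph{at time} $\tau$ feed into its single-spin reductions, and these equal the initial populations precisely in the diagonal-initial-state case — e.g. $\rho\subw{\pp}^{(0)}=\bigotimes_n\chi(T_n)$ — on which the cooling protocol actually relies.
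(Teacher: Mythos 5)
Your proof is correct, and it takes a genuinely different route from the paper's — operator grading plus covariance, versus an eigenvector-by-eigenvector argument — and in doing so it exposes a real defect in the paper's own proof. The paper expands each eigenvector $\ket{\psi}$ of $\tilde\rho_l$ in the computational basis, shows that its contribution to the first spin's reduction is $\sum_m|\alpha_m|^2\pr{a_m^1}$, and then extends to $\tau>0$ by asserting that the Kraus operators of a $\sigma^z$-excitation-conserving channel map $\h_l$ into $\h_l$ (``this is evidently true, as \dots $L_i\ket{\psi}=\sum_m\alpha_m'\ket{\psi_m^l}$''). That is exactly the step your counterexample shows cannot be derived from the paper's definition: the measure-and-prepare channel $\rho\mapsto\sum_{x\neq 01}\bra{x}\rho\ket{x}\,\pr{x}+\bra{01}\rho\ket{01}\,\pr{++}$ satisfies $\e^\dagger\bigl(\sum_n\sigma^z_n\bigr)=\sum_n\sigma^z_n$, hence conserves the mean excitation on every input, yet it maps a state obeying all hypotheses of the lemma to one whose single-spin reductions are $\pr{+}$. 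Read literally, then, the lemma is false under the paper's mean-conservation definition; both proofs actually require the stronger covariance/sector-preservation property, which your step (a) states explicitly and verifies termwise for the generator of \eq{master equation}, and which the paper invokes silently.

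As for the rest: your step (b), where equal total excitation of the strings $a\vec z$ and $b\vec z$ forces $a=b$, is a cleaner, basis-level equivalent of the paper's partial-trace computation and yields both conclusions at once; your observation that $\sigma^z$-Schmidt decomposability is automatic for states supported in single sectors is also right, so that hypothesis is redundant as you say; and your reading of the ``furthermore'' clause is the correct one — what is proved is that the reduced states at time $\tau$ are determined by the diagonal, in the computational basis, of the state at time $\tau$. Importantly, your strengthened hypothesis costs the paper nothing downstream: every channel actually used in \lemref{suficient condition reduced state probe thermal}, \thmref{Cooling protocol always works} and \app{stationary states} (XXZ/Heisenberg evolution, local $\sigma^z$ dephasing, perfect or partial swaps) commutes with the adjoint action of $\hat m=\sum_n\tfrac12(\one-\sigma^z_n)$, so the corrected lemma applies verbatim to all of the paper's conclusions.
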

\begin{proof}
At initial time, we may write every eigenvector of $\tilde \rho_l$ as 
\begin{equation}
\ket{\psi} = \sum_{m}\alpha_m \ket{\psi_m^l}. 
\end{equation}
where $\ket{\psi_m^l} = \bigotimes_{n=1}^N\ket{a_m^n}$, with $a_m^n \in \{0,1\}$. Each $\ket{\psi_m^l}$ has $a_m^n=0$ for $l$ spins and $a_m^n=1$ for $N-l$ spins. In other words, $\ket{\psi_m^l} \in \h_l$. By construction, $\<\psi_m^l|\psi_k^l\>=0$ if $m \ne k$. As such, the contribution of $\ket{\psi}$ to the reduced state of the first spin will be $\sum_m|\alpha_m|^2\pr{a_m^1}$, which is clearly $\sigma^z$-diagonal, and only involves the elements of $\tilde \rho_l$ that are diagonal with respect to the $\{\ket{0},\ket{1}\}^{\otimes N}$ basis. As a convex combination of $\sigma^z$-diagonal states are also $\sigma^z$-diagonal, then the reduced state of the first spin will also be $\sigma^z$-diagonal.  The same argument will hold, \emph{mutatis mutandis}, for all other spins. To show that this will hold true for all times, given a $\sigma^z$-excitation conserving quantum channel $\e_\tau$, it is sufficient to show that the state $\e_\tau(\pr{\psi})$ is itself $\sigma^z$-Schmidt decomposable. This is evidently true, as 
\begin{equation}
\e_\tau(\pr{\psi}) = \sum_i L_i\pr{\psi} L_i^\dagger,
\end{equation}
with $L_i \ket{\psi} = \sum_{m}\alpha_m' \ket{\psi_m^l}$.
\end{proof}

Now we prove a sufficient condition under which the reduced state of every spin in an $N$-partite system will be thermal with respect to the Hamiltonian $H\subw{\qq} := \frac{\omega}{2}\sigma^z$, with a temperature less than or equal to $T$. For the proof it will be simpler to use the ratio of probabilities of thermal states instead of temperature. We therefore use the following equivalence:
\begin{equation}
\frac{\<1| \chi(T')|1\>}{\<0| \chi(T')|0\>} \geqslant q  \iff  T' \leqslant T, 
\end{equation}
where $\chi(T)$ is defined as in \eq{bath qubit}, and $q,T, T'$ are all non-negative numbers. 

\

\begin{lem}\label{suficient condition reduced state probe thermal}
Let a quantum system composed of $N$ spin-half systems be prepared in a state 
\begin{equation}
\rho = \bigoplus_{l=0}^N \tilde \rho_l,\label{excitation subspace decomposition}
\end{equation}
where each $\tilde \rho_l$ is a subnormalised state on the subspace containing $l$ excitations of $\sigma_z$, i.e., $\h_l$. 
Furthermore, let $\rho$ be diagonal with respect to the basis $\{\ket{0},\ket{1}\}^{\otimes N}$, with $\bm{r_l}$ a vector composed of these diagonal elements. In this case, $\bm{r_l}$ is the spectrum of $\tilde \rho_l$, i.e., $\bm{r_l} = \lambda(\tilde \rho_l)$.  Finally, let the probe evolve according to a $\sigma^z$-excitation conserving quantum channel $\e_\tau$ that is also unital, i.e., $\e_\tau(\one) = \one$.  If for all $i,j,l$, the condition 
\begin{equation}
\frac{\bm{r_l}(i)}{\bm{ r_{l+1}}(j)}\geqslant q\label{probe state diagonal element ratio}
\end{equation}
is satisfied, where $\bm{r_l}(i)$ signifies the $i$\ts{th} element of the vector $\bm{r_l}$, and $q>0$, then the reduced state of every spin for all times $\tau> 0$, $\rho_n(\tau)$,  will be thermal with respect to the Hamiltonian $\sigma^z$, and with a temperature less than or equal to $T$.
\end{lem}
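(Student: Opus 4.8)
The plan is to reduce the statement to a single scalar inequality for the reduced spin and then establish it by exploiting the doubly-stochastic structure that unitality and trace preservation force onto the diagonal part of $\e_\tau$. First I would invoke \lemref{sigma-z diagonal}: since $\rho$ is diagonal in the $\{\ket{0},\ket{1}\}^{\otimes N}$ basis it is trivially $\sigma^z$-Schmidt decomposable, so for every spin $n$ the reduced state $\rho_n(\tau)$ is $\sigma^z$-diagonal and is determined solely by the computational-basis diagonal of $\e_\tau(\rho)$. A $\sigma^z$-diagonal qubit is automatically a thermal state of $H\subw{\qq}$ for some (possibly zero) temperature, so by the equivalence $\langle 1|\chi(T')|1\rangle/\langle 0|\chi(T')|0\rangle \geqslant q \iff T' \leqslant T$ the whole claim collapses to showing $\langle 1|\rho_n(\tau)|1\rangle \geqslant q\,\langle 0|\rho_n(\tau)|0\rangle$ for every $n$ and every $\tau>0$.

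Next I would analyse the classical transition matrix $P_{\bm{a},\bm{b}} := \bra{\bm{a}}\e_\tau(\pr{\bm{b}})\ket{\bm{a}}$ on computational basis strings $\bm{a},\bm{b}\in\{0,1\}^N$, which is all that governs the evolved diagonal. Trace preservation gives $\sum_{\bm{a}}P_{\bm{a},\bm{b}}=1$ (column-stochastic), while unitality, $\e_\tau(\one)=\one$, gives $\sum_{\bm{b}}P_{\bm{a},\bm{b}}=1$ (row-stochastic); hence $P$ is doubly stochastic. Because $\e_\tau$ conserves $\sigma^z$-excitation — in the subspace-preserving form already extracted in the proof of \lemref{sigma-z diagonal} — the matrix $P$ is block-diagonal across the excitation sectors $\h_l$, so the evolved diagonal within $\h_l$ is $\bm{r_l'}=P^{(l)}\bm{r_l}$ for a doubly-stochastic block $P^{(l)}$. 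The decisive consequence is that each $\bm{r_l'}(\bm{a})$ is a convex combination of the entries of $\bm{r_l}$, and therefore lies between $\min_i \bm{r_l}(i)$ and $\max_i \bm{r_l}(i)$.

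With this, the hypothesis $\bm{r_l}(i)/\bm{r_{l+1}}(j)\geqslant q$ for all $i,j$ — equivalently $\min_i \bm{r_l}(i)\geqslant q\,\max_j \bm{r_{l+1}}(j)$ — is inherited by the evolved distribution: $\bm{r_l'}(\bm{a})/\bm{r_{l+1}'}(\bm{a}')\geqslant q$ for every $\bm{a}\in\h_l$ and $\bm{a}'\in\h_{l+1}$. To finish I would use the involution that flips the $n$\ts{th} bit, which is a bijection between strings with $a_n=0$ and strings with $a_n=1$ and lowers the excitation count by one. Writing $\langle 0|\rho_n(\tau)|0\rangle=\sum_{\bm{a}:a_n=0}\bm{r'}(\bm{a})$ and $\langle 1|\rho_n(\tau)|1\rangle=\sum_{\bm{a}:a_n=1}\bm{r'}(\bm{a})$, then pairing each $a_n=0$ string in $\h_l$ with its flipped partner in $\h_{l-1}$ and applying the inherited ratio bound term-by-term yields $\langle 1|\rho_n(\tau)|1\rangle\geqslant q\,\langle 0|\rho_n(\tau)|0\rangle$, which is exactly $T'\leqslant T$.

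The main obstacle is the structural step in the second paragraph: recognising that unitality is precisely what upgrades the diagonal dynamics from merely column-stochastic to doubly stochastic, since it is double stochasticity — through the convex-combination bound — that turns the cross-sector ratio condition into a quantity preserved by the evolution. A secondary care point is justifying the block-diagonal form of $P$ from excitation conservation; rather than re-deriving it I would lean on the subspace-preserving Kraus structure already used in \lemref{sigma-z diagonal}. The remaining bijection argument is then routine bookkeeping, and the boundary cases (a vanishing ground-state weight corresponds to $T'=0\leqslant T$) are immediate.
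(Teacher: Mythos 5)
Your proof is correct, and its overall skeleton matches the paper's: reduce everything to the computational-basis diagonal via \lemref{sigma-z diagonal}, show that the cross-sector ratio condition \eq{probe state diagonal element ratio} is preserved by the evolution because each evolved diagonal entry is a convex combination of initial entries within the same excitation sector, and convert that sector-wise bound into the single-spin population inequality by a pairing argument (your bit-flip involution is exactly the fact the paper encodes as the identity $K_l = K_{l+1}'$ between binomial counts). The genuine difference is how you obtain the convex-combination structure. The paper works with spectra: Uhlmann's theorem for unital channels gives $\lambda(\tilde\rho_l) \succ \lambda(\tilde\rho_l(\tau))$, the diagonal of a state is majorized by its spectrum, and the resulting majorization $\bm{r_l} \succ \bm{r_l(\tau)}$ is then converted into a doubly stochastic matrix $Q$ with $\bm{r_l(\tau)} = Q\bm{r_l}$. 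You instead build the transition matrix $P_{\bm{a},\bm{b}} = \bra{\bm{a}}\e_\tau(\pr{\bm{b}})\ket{\bm{a}}$ and read off double stochasticity directly --- columns sum to one by trace preservation, rows by unitality --- with block-diagonality across the sectors $\h_l$ inherited from the subspace-preserving Kraus form used in \lemref{sigma-z diagonal}. Your route is more elementary (no Uhlmann, no majorization-to-bistochasticity equivalence) and isolates precisely where unitality enters; it is legitimate here because $\rho$ is diagonal by hypothesis, so the evolved diagonal really is the linear image of the initial diagonal under $P$, whereas the paper's majorization argument would also tolerate off-diagonal entries inside each block $\tilde\rho_l$. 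Both arguments lean equally on the strong, subspace-preserving reading of $\sigma^z$-excitation conservation, so you lose nothing by deferring to \lemref{sigma-z diagonal} on that point. One cosmetic slip: with the paper's convention $H\subw{\qq} = \tfrac{\omega}{2}\sigma^z$, the ground state is $\ket{1}$, so your boundary case should read ``vanishing weight on $\ket{0}$'' (a vanishing denominator), which indeed corresponds to $T' = 0 \leqslant T$.
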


\begin{proof}
Due to \lemref{sigma-z diagonal}, the reduced state of every spin will be $\sigma^z$-diagonal at all times, which is a necessary condition for it to be assigned a temperature. Moreover the only elements of $\rho$ contributing to the elements of the reduced state of any spin are given by the vectors $\bm{r_l}$. By ordering each of these vectors appropriately, we can show that
\begin{equation}
\frac{\<1|\rho_n|1\>}{\<0|\rho_n|0\>}=\frac{\sum_{l=0}^{N-1} \sum_{i=1}^{K_l}\bm{r_l}(i)}{\sum_{l=1}^{N} \sum_{i=1}^{K'_l}\bm{r_l}'(i)}.\label{reduced state population ratio}
\end{equation}
We note that $\bm{r_l}$ and $\bm{r_l}'$ have the same elements, but with a different ordering. Also, for each $l$, $K_l+K_l'$ equals the dimension of the $l$-excitation subspace, given as 
\begin{equation}
\mathrm{dim}(\h_l) = \left( \begin{matrix}
N\\
l\\
\end{matrix} \right) := \frac{N!}{l!(N-l)!},
\end{equation}
with 
\begin{equation}
K_l= \left( \begin{matrix}
N-1\\
l\\
\end{matrix} \right), \ \ \ K_l'= \left( \begin{matrix}
N-1\\
l-1\\
\end{matrix} \right).
\end{equation}
We note that $K_N = K_0' = 0$. From this observation, it will be simple to deduce that for all  $l \in \{0,\dots,N-1\}$, we have $K_l = K_{l+1}'$. As a consequence of \eq{probe state diagonal element ratio}, and the above observations, it therefore follows that for each $l\in \{0,\dots,N-1\}$, 
\begin{equation}
\sum_{i=1}^{K_l}\bm{r_l}(i) \geqslant q \sum_{i=1}^{K_{l+1}'}\bm{r_{l+1}'}(i).
\end{equation}
As such, \eq{reduced state population ratio} will obey the inequality
\begin{equation}
\frac{\<1|\rho_n|1\>}{\<0|\rho_n|0\>}\geqslant q \frac{\sum_{l=1}^{N} \sum_{i=1}^{K_l'}\bm{r_l}'(i)}{\sum_{l=1}^{N} \sum_{i=1}^{K'_l}\bm{r_l}'(i)} = q.
\end{equation}
Therefore, given the stated conditions on the initial state of the system, the reduced state of every spin will be thermal with respect to the Hamiltonian $\sigma^z$, with a temperature less than or equal to $T$. To show that this will be true at all later times, we note that the state of the probe, at time $\tau>0$, will be given as
\begin{align}
\e_\tau(\rho) &= \bigoplus_{l=0}^N \e_\tau(\tilde \rho_l),\nonumber \\
&= \bigoplus_{l=0}^N \tilde \rho_l(\tau).
\end{align}
The fact that the direct sum structure is preserved by $\e_\tau$ follows from the fact that it is $\sigma^z$-excitation conserving. As $\e_\tau$ is unital, by Uhlmann's theorem \cite{Wehrl-Entropy} we know that  the  vector composed of the spectrum of  $\tilde \rho_l$ majorizes that of  $\tilde \rho_l(\tau)$, i.e.
\begin{equation}
\lambda(\tilde \rho_l)  \succ \lambda(\tilde \rho_l(\tau)).
\end{equation}
Furthermore, it is trivial that the vector composed of the diagonal elements in any basis is majorized  by that of the spectrum, i.e.
\begin{equation}
\lambda(\tilde \rho_l(\tau))  \succ \bm{r_l(\tau)}.
\end{equation}
As $\bm{r_l}=\lambda(\tilde \rho_l)$, it follows therefore that for all $l$ and $\tau>0$,
\begin{equation}
\bm{r_l} \succ \bm{r_l(\tau)}.
\end{equation}
Furthermore, since the above equation implies that 
\begin{equation}
\bm{r_l(\tau)} = Q \bm{r_l},
\end{equation}
where $Q$ is a doubly stochastic matrix \cite{Uhlmann-Stochasticity}, then every element of $\bm{r_l(\tau)}$ is given as a convex combination  of those in  $\bm{r_l}$.  Consequently,  \eq{probe state diagonal element ratio} is satisfied  at all times and, hence,  the reduced  state of every spin in the system will be thermal with respect to $\sigma^z$, with a temperature less than or equal to $T$, at all times.
\end{proof}

Now we determine the sufficient conditions for the cooling protocol to always cool the thermal qubits, or leave them the same. 

\

\begin{thm}\label{Cooling protocol always works}
Let the probe be  initially prepared in the state
\begin{equation}
\rho\subw{\pp}^{(0)}= \bigotimes_{n=1}^N \chi(T_n),\label{probe initial state general}
\end{equation}
such that for all $n$, $T_n \leqslant T$. Furthermore, let  $\e_{\tau_k}$ in \eq{total evolution} be unital and $\sigma^z$-excitation conserving.  It follows that the cooling protocol will always cool a collection of $K$ thermal qubits of temperature $T$, or leave them the same, irrespective of the waiting times $\{\tau_k\}_k$ and number of thermal qubits $K$.
\end{thm}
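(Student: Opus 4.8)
The plan is to argue by induction on the stage index $k$, using \lemref{sigma-z diagonal} and \lemref{suficient condition reduced state probe thermal} as the engine. At each stage it is convenient to pass to the combined $(N+1)$-spin system $\pp+\qq$ and to treat the whole stage map $\mathrm{SWAP}\circ\e_{\tau_k}$ of \eq{total evolution} as a single channel. First I would check that this combined channel is again unital and $\sigma^z$-excitation conserving: $\e_{\tau_k}$ is so by hypothesis (extended trivially to $\qq$), a swap of two spins is unitary and conserves total excitation, and both properties are closed under composition; the same holds for the imperfect, Heisenberg-generated swap of \sect{imperfections section}, since that evolution is unitary and commutes with the total $\sigma^z$. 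Every spin of the output — in particular $\qq$ — is then eligible for \lemref{suficient condition reduced state probe thermal}, which would give $\rho\subw{\qq}^{(k)}=\chi(T^{(k)})$ with $T^{(k)}\leqslant T$, i.e.\ the qubit is cooled or left unchanged.

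To feed \lemref{suficient condition reduced state probe thermal} I need the input $\rho\subw{\pp}^{(k-1)}\otimes\chi(T)$ to carry the $\bigoplus_l$ excitation grading and to meet the coldness hypothesis. The base case is the product state of \eq{probe initial state general}: it is diagonal in $\{\ket0,\ket1\}^{\otimes N}$, manifestly block-graded, and each factor obeys $\langle1|\chi(T_n)|1\rangle/\langle0|\chi(T_n)|0\rangle=q_n:=e^{\omega/k_BT_n}\geqslant q:=e^{\omega/k_BT}$ because $T_n\leqslant T$. In the clean sub-case $T_n=T$ the ratio of any element of $\h_l$ to any element of $\h_{l+1}$ is exactly $q$, so the pointwise hypothesis \eqref{probe state diagonal element ratio} holds with equality and \lemref{suficient condition reduced state probe thermal} applies verbatim. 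I would also record that $\chi(T)^{\otimes N}$ is block-wise maximally mixed, hence a fixed point of any unital, block-preserving channel — this is precisely the stationary state \eq{steady_state} — and use it to anchor the comparison below.

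The inductive step is where the work lies. For a perfect swap the post-stage probe is $\rho\subw{\pp}^{(k)}=\chi(T)\otimes\tr_1[\e_{\tau_k}(\rho\subw{\pp}^{(k-1)})]$, which in general is neither a product of thermal states nor diagonal: the free evolution spreads excitations, creating intra-block coherences and correlations among the remaining spins. The invariant I would therefore propagate is stated only on the block-diagonal data, as justified by \lemref{sigma-z diagonal}: the reduced state of any spin depends only on the diagonal elements $\bm{r_l}$, and these evolve within each fixed-weight excitation block by a doubly stochastic (majorization-non-increasing) map under a unital, excitation-conserving channel.

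The main obstacle — and the step I expect to be genuinely delicate — is that the strong pointwise ratio \eqref{probe state diagonal element ratio} fails once the $T_n$ are unequal: for $N\geqslant3$ one can exhibit an element of $\h_l$ and one of $\h_{l+1}$ whose ratio is $q^2/q_n<q$ when a single spin is much colder, so \lemref{suficient condition reduced state probe thermal} cannot be invoked off the shelf for the mixed-temperature or post-swap states. To close this gap I would replace the pointwise condition by a majorization bound: writing $\langle0|\rho\subw{\pp,n}|0\rangle$ as a sum over the excited configurations of each block and using that a doubly stochastic map cannot place on any $K_l'=\binom{N-1}{l-1}$ configurations more mass than the sum of the $K_l'$ largest entries of $\bm{r_l}$, one obtains $\langle0|\rho\subw{\pp,n}(\tau)|0\rangle\leqslant\max_m\langle0|\chi(T_m)|0\rangle$ — no spin is ever driven warmer than the warmest spin of the probe, which is $\leqslant T$. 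Establishing that this warmest-spin bound is (i)~exactly what the product base case yields and (ii)~preserved by the partial trace over $\qq$ and the re-tensoring with the fresh $\chi(T)$ at the start of each stage is the crux; the fixed-point role of $\chi(T)^{\otimes N}$ and the contractivity of the block-wise doubly stochastic maps toward it are the tools I would use to make the invariant self-reproducing, after which $T^{(k)}\leqslant T$ for all $k$ and all $K$ follows by induction, irrespective of the waiting times $\{\tau_k\}_k$.
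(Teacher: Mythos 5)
Your diagnosis of the delicate step is exactly right, and it is not merely an obstacle for your induction: it is an unacknowledged gap in the paper's own proof. The paper does not induct over stages at all. It treats the $K$ qubits and the $N$ probe spins as a single $(K{+}N)$-spin system prepared in $\bigotimes_{n=1}^{K+N}\chi(T_n)$ (with $T_n=T$ on the qubits and $T_n\leqslant T$ on the probe), observes that the whole protocol is a composition of unital, $\sigma^z$-excitation-conserving channels on that system, then \emph{asserts} that this product state satisfies \eq{probe state diagonal element ratio} and applies \lemref{suficient condition reduced state probe thermal} once, globally. That assertion is precisely what you showed to be false whenever some $T_n<T$: your three-spin ratio $q^2/q_n<q$, and, for the paper's flagship initialization \eq{Heisenberg probe initial state} (all $T_n=0$), some ratios vanish outright, since every configuration exciting a zero-temperature spin has zero weight. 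So your attempt and the paper's proof hinge on the same claim; the paper simply asserts it at a point where it fails.

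However, the invariant you propose as the repair --- that no spin can be driven warmer than the warmest spin --- is itself false, so your crux step cannot be established; indeed the theorem fails under its stated hypotheses. Write $a:=\<0|\chi(T)|0\>$ and $b:=\<1|\chi(T)|1\>$. Take $N=4$ and $\rho\subw{\pp}^{(0)}=\chi(T)\otimes\chi(T)\otimes\chi(T)\otimes\pr{1}$, temperatures $(T,T,T,0)$, as the theorem permits, and let the free evolution be conjugation by the unitary $U$ that exchanges the basis vectors $\ket{0110}\leftrightarrow\ket{1001}$ (spin order $1,2,3,4$; both vectors lie in $\h_2$) and fixes every other basis vector. $U$ commutes with $\sum_n\sigma^z_n$, so this channel is unitary, hence unital, and $\sigma^z$-excitation conserving; it is generated in time $\tau_1$ by the excitation-conserving Hamiltonian $g(\ket{0110}\!\bra{1001}+\ket{1001}\!\bra{0110})$ with $g\tau_1=\pi/2$, the phases being irrelevant for diagonal inputs. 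Since $w(\ket{1001})=a^2b$ whereas $w(\ket{0110})=0$ (the latter would require the cold fourth spin to be excited), the exchange moves weight $a^2b$ onto a configuration in which spin $1$ \emph{is} excited, raising the excited population of spin $1$ from $a$ to
\begin{equation*}
ab^{2}+3a^{2}b+a^{3}\;=\;a+a^{2}b\;>\;a\;=\;\max_{m}\<0|\chi(T_{m})|0\>,
\end{equation*}
which is exactly the block-wise sum of largest entries that your majorization step is meant to control: the warmest-spin bound already fails in the product base case, under the most elementary block-doubly-stochastic map (a permutation). A perfect swap then delivers the heated diagonal state $\mathrm{diag}(a+a^{2}b,\,b-a^{2}b)$ to the thermal qubit, contradicting \thmref{Cooling protocol always works} already for $K=1$. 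Nor is this probe state exotic: starting from the paper's own $\pr{1}^{\otimes 4}$, three protocol steps whose free evolutions are ordinary probe-spin swaps (each unitary and excitation conserving, each cooling its qubit to zero temperature) leave the probe in exactly the state $\chi(T)^{\otimes 3}\otimes\pr{1}$, and a fourth step with $U$ then heats the fourth qubit above $T$. The right conclusion is therefore not that a cleverer self-reproducing invariant is needed, but that none can exist at this level of generality: the first stage with a uniform-temperature probe is safe (\lemref{suficient condition reduced state probe thermal} applied to the probe alone), yet the iteration inevitably produces non-uniform probe states, and for those, unitality plus $\sigma^z$-excitation conservation is not enough. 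Any correct version of the theorem must restrict $\e_{\tau_k}$ beyond these two properties --- for instance to features specific to the Heisenberg-plus-dephasing dynamics of \eq{master equation} --- which neither your argument nor the paper's invokes.
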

\begin{proof}
We  may write the composition of $K$ thermal qubits and the probe as 
\begin{equation}
\rho =  \bigotimes_{n=1}^{K+N} \chi(T_n),
\end{equation}
such that for all $n\in \{1,\dots,K\}$, $T_n=T$, whereas for all $n \in \{K+1,\dots, K+N\}$, $T_n \leqslant T$. Clearly, the eigenvectors of $\rho$ are product vectors from the basis $\{\ket{0},\ket{1}\}^{\otimes N+K}$ and, as such, it can be decomposed into a direct sum of subnormalized states in different excitation subspaces, as in \eq{excitation subspace decomposition}. Furthermore, the vectors of the spectrum satisfy \eq{probe state diagonal element ratio}. Therefore as a consequence of \lemref{suficient condition reduced state probe thermal},  if the total system of probe plus thermal qubits evolves according to a unital quantum channel that is $\sigma^z$-excitation conserving, then the reduced state of every thermal qubit will be thermal with respect to the Hamiltonian $H\subw{\qq} := \frac{\omega}{2}\sigma^z$, with a temperature less than or equal to $T$. Every stage of the cooling protocol, of course, is determined by the quantum channel defined in \eq{total evolution} acting on the compound system of thermal qubit $k$ and the probe. As a (possibly imperfect) swap operation is both  unital and  $\sigma^z$-excitation conserving, then we arrive at the statement of the theorem.
\end{proof}

\

\section{The unique stationary state of the  probe}\label{stationary states}

We wish to show that the only stationary state of the probe, given the dynamics it undergoes with the thermal qubits, is  $\chi(T)^{\otimes N}$.
We first introduce some notation. We take the probe $\pp$ to be a collection of spins labeled by the integers $\{1,\dots,N\}$. As such, the reduced state of any subset of spins $X$ is defined as $\rho\sub{X}:= \tr\subw{\pp \backslash X}[\rho\subw{\pp}]$, where $\pp\backslash X$ is the complement of $X$ in the set $\pp$.

\

\begin{lem}\label{unital stationary if}
Consider the composition of a thermal qubit $\qq$ and the probe $\pp$, in the state $\rho = \chi(T) \otimes \rho\subw{\pp}$.   Let the evolution of the system be determined by the quantum channel $\vv_\tau$ defined in \eq{total evolution} as
\begin{equation}
\vv_\tau = \mathrm{SWAP} \circ \e_{\tau},\label{composite quantum channel}
\end{equation}
where $\mathrm{SWAP}$ is a possibly imperfect swap operation between the thermal qubit and the first spin of the probe, while $\e_\tau$ is a unital and $\sigma^z$-excitation conserving quantum channel. If $\rho\subw{\pp} = \chi(T)^{\otimes N}$, then $\vv_\tau(\rho) = \rho$. 
\end{lem}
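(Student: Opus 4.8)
The plan is to show that the full state $\rho = \chi(T)\otimes\chi(T)^{\otimes N} = \chi(T)^{\otimes(N+1)}$ on the qubit plus the $N$ probe spins is a fixed point of $\vv_\tau$ by exploiting its excitation-subspace structure. The key observation is that $\chi(T)$ is diagonal in the $\sigma^z$ eigenbasis with $\langle 1|\chi(T)|1\rangle/\langle 0|\chi(T)|0\rangle = e^{\omega/k_BT}$, so in the $\{\ket{0},\ket{1}\}^{\otimes(N+1)}$ basis the weight of a computational basis vector depends only on its number of excitations $l$. Writing the $(N+1)$-spin space as $\bigoplus_l \h_l$, the state therefore restricts to a multiple of the identity on each block, $\rho = \bigoplus_l w_l\,\one_{\h_l}$, where $w_l$ is the common diagonal weight in $\h_l$. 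I would then argue that any channel that is both unital and $\sigma^z$-excitation conserving fixes every such block-flat state, and observe that $\vv_\tau = \mathrm{SWAP}\circ\e_\tau$ is of this type.

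For the engine I would use that, as established in the proof of \lemref{suficient condition reduced state probe thermal}, $\sigma^z$-excitation conservation makes the channel preserve the direct-sum structure, so $\e_\tau(\one_{\h_l})$ remains supported on $\h_l$. Summing over $l$ and using unitality, $\sum_l \e_\tau(\one_{\h_l}) = \e_\tau(\one) = \one = \sum_l \one_{\h_l}$; matching the mutually orthogonal blocks forces $\e_\tau(\one_{\h_l}) = \one_{\h_l}$ for every $l$, whence by linearity $\e_\tau(\rho) = \sum_l w_l\,\one_{\h_l} = \rho$. Equivalently, one may invoke the majorization relation $\lambda(\tilde\rho_l)\succ\lambda(\e_\tau(\tilde\rho_l))$ from that lemma: since the flat spectrum of $w_l\one_{\h_l}$ is majorized by every distribution of the same dimension, the relation forces $\lambda(\e_\tau(\tilde\rho_l))$ to equal it, so the block is unchanged. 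Applied to $\e_\tau$ acting as $\mathrm{id}\subw{\qq}\otimes\e_\tau$, this already shows $\e_\tau(\chi(T)^{\otimes N}) = \chi(T)^{\otimes N}$, i.e. that free evolution leaves the pseudo-thermal probe state invariant.

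It then remains to handle the SWAP. In the ideal case the state after free evolution is $\chi(T)\otimes\chi(T)^{\otimes N}$, so the two sites exchanged by the gate --- the thermal qubit and the first probe spin --- both carry $\chi(T)$, and swapping two identical reduced states leaves the joint state unchanged. For a possibly imperfect swap I would instead verify directly that the swap channel is itself unital (it is generated by the finite-duration unitary evolution under $H\subw{\pp} + H\subw{\qq} + H_I(t)$) and $\sigma^z$-excitation conserving (the total $\sum_n \sigma^z_n$ commutes with the Heisenberg probe term, the qubit term, and the Heisenberg interaction term), so the block-flat argument above applies verbatim to the whole composite and gives $\mathrm{SWAP}(\rho) = \rho$; composing yields $\vv_\tau(\rho) = \rho$. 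The main obstacle is conceptual rather than computational: one must be careful that the operative notion of ``$\sigma^z$-excitation conserving'' is the strong one used in \lemref{sigma-z diagonal} and \lemref{suficient condition reduced state probe thermal} --- namely that the channel preserves each excitation subspace, equivalently that its Kraus operators commute with the excitation number --- since this is exactly what licenses the block-by-block matching, rather than mere conservation of the mean excitation. With that established, every remaining step is immediate.
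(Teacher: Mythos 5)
Your proof is correct, and it shares the paper's skeleton: write $\chi(T)^{\otimes(N+1)}$ as a direct sum $\bigoplus_l \tilde\rho_l$ over $\sigma^z$-excitation subspaces, observe that each block is proportional to $\one_{\h_l}$, and close the argument using unitality and excitation conservation of $\vv_\tau = \mathrm{SWAP}\circ\e_\tau$. The difference is the engine. The paper invokes Uhlmann's theorem: within each block the spectrum is already maximally mixed, and since a unital channel's output spectrum is majorized by its input spectrum, the flat spectrum --- the minimal element in the majorization order --- must be reproduced exactly, with a separate remark that off-diagonal elements stay zero. Your primary argument is more elementary and, in one respect, tighter: block preservation plus global unitality give $\sum_l \e_\tau(\one_{\h_l}) = \e_\tau(\one) = \one = \sum_l \one_{\h_l}$, and matching mutually orthogonal supports forces $\e_\tau(\one_{\h_l}) = \one_{\h_l}$, after which linearity finishes the proof. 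That matching step is in fact what legitimizes the paper's block-by-block use of Uhlmann's theorem (one needs the channel restricted to each $\h_l$ to act unitally \emph{on that block}, which does not follow from global unitality alone without support preservation), so your route makes explicit something the paper leaves implicit. Your closing caveat is also well placed and applies to the paper itself: the formal definition of $\sigma^z$-excitation conservation given in the paper only fixes the expectation of $\sum_n \sigma^z_n$, which by itself would not forbid leakage between blocks; every appendix proof, including the paper's own, actually uses the stronger property that the channel (equivalently, its Kraus operators) preserves each $\h_l$. Finally, your verification that the imperfect SWAP generated by \eq{qubit-chain Hamiltonian} is unital and block-preserving (its generator commutes with total $\sigma^z$) supplies the justification that the paper compresses into the single assertion that SWAP is unital and $\sigma^z$-excitation conserving.
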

\begin{proof}
If $\rho = \chi(T)^{\otimes N + 1}$, it can be written as 
\begin{align}
\rho = \bigoplus_{l=0}^{N+1} \tilde\rho_l,
\end{align}
such that the diagonal vectors, in the $\{\ket{0}, \ket{1}\}$ basis, for each of the $l$-excitation subspaces $\h_l$ will be uniform. In other words, $\bm{r_l}(i) = \bm{r_l}(j)$ for all $i,j$. As SWAP is a unital and $\sigma^z$-excitation conserving quantum channel, then so is $\vv_\tau$. Due to Uhlmann's theorem, given that $\bm{r_l}$, which is the spectrum of $\rho$ within the subspace $\h_l$, is already maximally mixed, it follows that   $\bm{r_l(\tau)}= \bm{r_l}$. Moreover, all the off-diagonal elements of $\rho$ remain zero. As such, $\vv_\tau(\rho) = \rho$.  
\end{proof}

\begin{thm}\label{XXZ stationary only if}
If both $\mathrm{SWAP}$ and $\e_\tau$ are generated by an XXZ spin Hamiltonian, possibly in the presence of dephasing, then $\rho = \chi(T) \otimes \rho\subw{\pp}$ will be the stationary state of $\vv_\tau$ if and only if $\rho\subw{\pp} = \chi(T)^{\otimes N}$.   
\end{thm}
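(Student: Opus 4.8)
The ``if'' direction is exactly \lemref{unital stationary if}, so the real work lies in the ``only if'' direction: assuming $\vv_\tau(\rho)=\rho$ for a composite state $\rho = \chi(T)\otimes\rho\subw{\pp}$, I would deduce that $\rho\subw{\pp}=\chi(T)^{\otimes N}$. The plan is to re-use the excitation-sector machinery of \lemref{suficient condition reduced state probe thermal}. Since SWAP is generated by an XXZ Hamiltonian and $\e_\tau$ by an XXZ Hamiltonian with possible dephasing, the composite $\vv_\tau$ is unital, trace preserving and $\sigma^z$-excitation conserving; hence it preserves the direct-sum decomposition $\rho = \bigoplus_{l=0}^{N+1}\tilde\rho_l$ over the subspaces $\h_l$ of fixed total excitation, and a fixed point must satisfy $\vv_\tau(\tilde\rho_l)=\tilde\rho_l$ sector by sector.

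Within a single sector I would argue that the maximally mixed state $\Pi_l/\tr[\Pi_l]$ (with $\Pi_l$ the projector onto $\h_l$) is the \emph{unique} fixed point. That it \emph{is} fixed follows from unitality; uniqueness is the crux. Here the XXZ structure is essential: the flip-flop terms $\sigma^x_n\sigma^x_{n+1}+\sigma^y_n\sigma^y_{n+1}$ hop a single excitation between neighbouring sites, and on the connected chain they couple every computational-basis vector of $\h_l$ to every other, so the generated dynamics (together with SWAP, and with dephasing damping the off-diagonal coherences) leaves no proper invariant subspace of $\h_l$. This irreducibility makes the restricted channel primitive, so its only stationary state is the maximally mixed one. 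Equivalently, phrased through Uhlmann's theorem as in the previous lemma, the diagonal weights $\bm{r_l}$ transform under a doubly stochastic matrix whose unique fixed distribution, once irreducibility is granted, is uniform; combined with the decay of coherences this forces $\tilde\rho_l\propto\Pi_l$.

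Finally I would translate ``maximally mixed in every sector'' back into a statement about $\rho\subw{\pp}$. Permutation invariance of the diagonal, together with the product form $\rho=\chi(T)\otimes\rho\subw{\pp}$ and the fact that the qubit marginal is fixed to $\chi(T)$, forces the probe diagonal entry for a configuration with $m$ excitations to obey a geometric recursion with ratio $\<1|\chi(T)|1\>/\<0|\chi(T)|0\>$; this is precisely the diagonal of $\chi(T)^{\otimes N}$. The absence of coherences then yields $\rho=\chi(T)^{\otimes N+1}$, and cancelling the invertible qubit factor gives $\rho\subw{\pp}=\chi(T)^{\otimes N}$.

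I expect the irreducibility/primitivity claim to be the main obstacle, i.e.\ rigorously establishing that $\vv_\tau$ restricted to $\h_l$ admits a unique fixed point. This is exactly where the XXZ assumption cannot be dropped — a purely Ising interaction, or a disconnected interaction graph, would leave many invariant diagonal distributions and the uniqueness would fail — and it is where the dephasing, when present, does the essential job of guaranteeing aperiodicity and the decay of the coherences that a purely unitary XXZ evolution could otherwise preserve.
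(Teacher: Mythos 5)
Your proposal takes a genuinely different route from the paper, but it has a genuine gap that breaks the ``only if'' direction. The crux of your argument is that $\vv_\tau$ restricted to each excitation sector $\h_l$ is primitive, so that the maximally mixed state $\Pi_l/\tr[\Pi_l]$ is its \emph{unique} fixed point there. This is false precisely in the case the theorem must also cover: ``possibly in the presence of dephasing'' means $\Gamma=0$ is allowed, and then $\vv_\tau$ is a unitary channel, $\vv_\tau(\rho)=U\rho U^\dagger$ with $U=\mathrm{SWAP}\, e^{-\imag H\subw{\pp}\tau}$. A unitary channel is never primitive: every spectral projector of $U$ inside $\h_l$ yields a fixed point, so each sector of dimension larger than one carries a continuum of stationary states besides $\Pi_l/\tr[\Pi_l]$. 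You yourself note that dephasing ``does the essential job'' of guaranteeing aperiodicity and killing coherences, but the theorem cannot lean on dephasing, so the sector-wise uniqueness step collapses exactly where you located the main obstacle. (A secondary issue: your direct-sum decomposition presumes the candidate stationary state is block diagonal across sectors, which is not automatic for $\chi(T)\otimes\rho\subw{\pp}$ with arbitrary $\rho\subw{\pp}$, and purely unitary evolution does not erase inter-sector coherences.)

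What saves the theorem is the restriction to states of the product form $\chi(T)\otimes\rho\subw{\pp}$, and this hypothesis must do work throughout the argument, not only in a final translation step as in your plan. That is how the paper proceeds: for the unitary case it reduces stationarity to $[\rho,H]_-=0$ for the total XXZ network Hamiltonian (the thermal qubit labeled as spin $0$, coupled only to spin $1$), expands $\rho$ in the Pauli basis with the spin-$0$ coefficients pinned to those of $\chi(T)$, and shows that the $n=0,\,m=1$ summand of the commutator can vanish only if $\rho_1=\chi(T)$ and the state factorizes as $\chi(T)_1\otimes\rho_{\pp\backslash 1}$; recursing spin by spin down the chain gives $\rho\subw{\pp}=\chi(T)^{\otimes N}$, and dephasing is handled separately by checking that the dissipator also annihilates $\chi(T)^{\otimes N+1}$. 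If you wanted to salvage your sector-based strategy, you would have to impose the product structure inside each sector from the outset, at which point the argument essentially reduces to the paper's commutator computation.
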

\begin{proof}
In the absence of dephasing, $\rho$ is stationary with respect to $\vv_\tau$ if and only if it commutes with the XXZ spin network Hamiltonian that governs the total compound system.  Let us denote this total Hamiltonian as 
\begin{align}\label{spin network hamiltonian}
H_\pp :=   \sum_{n=0, m>n}^{N}  J_{n,m}& (\Delta_{n,m} \sigma_n^z\otimes\sigma_m^z  \nonumber \\
&+ \sigma_n^x\otimes\sigma_m^x + \sigma_n^y\otimes\sigma_m^y) .
\end{align}
Here, we label $\qq$ as spin $n=0$, $J_{n,m}$ is the interaction strength between the $n$\ts{th} and $m$\ts{th} spins, and $\Delta_{n,m}$ is the anisotropy parameter in the $z$ direction. By defining $\pp+\qq := \{0,\dots,N\}$, we may expand $\rho$ in  the Pauli basis as
\begin{equation}
\rho:= \sum_{a=0}^3 r^a_n \sigma^a_n \otimes O^a_{\pp+\qq \backslash n},
\end{equation}
for any spin $n \in \pp+\qq$. Here $\sigma^0=\one$, $\sigma^1=\sigma^x$, $\sigma^2=\sigma^y$, and $\sigma^3=\sigma^z$. 
Here, $\sum_{a=0}^3 r^a_0 \sigma^a_0 = \chi(T)$.  As such,
 we  have
 \begin{widetext}
 \begin{align}
\left[ \rho,H \right]_-&= \sum_{n=0,m>n}^{N} J_{n,m} \sum_{a,b=0}^3\sum_{c=1}^3 (1+\delta_{c,3}(\Delta_{n,m}-1))r^a_n r^b_{m}\left[ \sigma^a_n \otimes \sigma^b_{m}, \sigma^c_n \otimes \sigma^c_{m} \right]_- \otimes O^b_{\pp+\qq \backslash \{n,m\}}. \label{summand equation bla}\end{align}
\end{widetext}
The entire expression vanishes only if the summands vanish individually for each $n$ and $m$.  We now introduce the identity
\begin{align}
[A\otimes B, C\otimes D]_-&= \frac{1}{2} \left([A,C]_- \otimes [B,D]_+\right) \nonumber \\
&+ \frac{1}{2}\left([A,C]_+ \otimes [B,D]_- \right),
\end{align}
where $[\cdot,\cdot]_+$ is the anti-commutator, and the relations
\begin{align}
[\sigma^a,\sigma^b]_-&= 2 \imag \epsilon_{abc} \sigma^c, \\
[\sigma^a,\sigma^b]_+&= 2\delta_{ab} \one.
\end{align}
Here $\epsilon_{abc}=1$ (respectively -1) with $\{a,b,c\}$ a cyclic (respectively anti-cyclic) permutation of $\{x,y,z\}$, and  $\delta_{ab}=1$ if $a=b$ and 0 otherwise. Using the above identities, we see that the summand in the first line of \eq{summand equation bla},  for $n=0$ and $m=1$   is
\begin{align}
 2\imag \sum_{a,b=1}^3 r^a_0 r^b_{1}\epsilon_{abc} \left( \one_0 \otimes \sigma^c_{1}-\sigma^c_0 \otimes \one_{1}\right) \otimes O^b_{\pp+\qq \backslash \{0,1\}}.\label{commutator summand n=1 term}
\end{align}
We only consider this term for the case of $n=0$,  as the only nonvanishing value of $J_{0,m}$ is when $m=1$ by construction. The summands with $a=b$ clearly vanish, as in such cases we have $\epsilon _{aac}=0$. The remaining summands cannot vanish if $O^b_{\pp \backslash \{0,1\}}$ are not the same for all values of $b$. If $O^b_{\pp \backslash \{0,1\}}$ are the same for all values of $b$, however, then
\begin{equation}
\rho_\pp = \sum_{b=0}^3 r^b_1 \sigma^b_1 \otimes \rho_{\pp\backslash 1},
\end{equation}
with $\sum_{b=0}^3 r^b_1 \sigma^b_1= \rho_1$ describing a quantum state that could be different from $\chi(T)$.  In such cases, however, $r^a_0 r^b_1$ are all positive numbers, and the only way for \eq{commutator summand n=1 term} to vanish is if $r^a_0=r^a_1$ for all $a$. When this is satisfied, the summands with the values of $a$ and $b$ interchanged differ only by a sign change, and therefore cancel out. But this means that $\rho_1=\chi(T)$. Hence, for the state to commute with the Hamiltonian, we must have
\begin{equation}
\rho_{\pp}= \chi(T)_1 \otimes \rho_{\pp\backslash 1}.
\end{equation}
Carrying out the same argument recursively for all $n \in \{1,\dots,N\}$, we prove that the only state $\rho_\pp$, such that $\chi(T)\otimes \rho_\pp$  commutes with the total XXZ Hamiltonian $H$, is
\begin{equation}
\rho_{\pp}= \chi(T)^{\otimes N}.
\end{equation}
To including dephasing, the dissipator term of the Liouville super-operator in \eq{master equation} must also vanish, i.e., we must    show that
\begin{equation}
 \sum_{n=0}^N \sigma^z_n \rho \sigma^z_n - \rho = \zero.
\end{equation}
If $\rho = \chi(T)^{\otimes N+1}$, with $\chi(T)= \frac{1}{2} \one + r^z\sigma^z $,  then for each $n$ we have
\begin{align}
\sigma^z_n \rho \sigma^z_n &= (\sigma^z_n \chi(T)\sigma^z_n )\otimes \left(\bigotimes_{m\in \pp +\qq \backslash n} \chi(T)_m \right), \nonumber \\ & =  \rho. \end{align}
\end{proof}

\section{Entropic inequalities}\label{Entropic inequalities}
The von Neumann entropy of a system in state $\rho$ is defined as 
\begin{equation}
S(\rho) := -\tr[\rho \ln(\rho)],
\end{equation}
where $\ln(\cdot)$ is the natural logarithm. The increase in entropy of the probe at the $k$\ts{th} stage of the protocol is defined as 
\begin{equation}
\Delta S\subw{\pp}^{(k)} := S\left(\rho_{\pp}^{(k)}\right) - S\left(\rho\subw{\pp}^{(k-1)}\right),
\end{equation}
whereas the decrease in entropy of the $k$\ts{th} thermal qubit is
\begin{equation}
\Delta S\subw{\qq}^{(k)} := S_T   - S\left(\rho_{\qq}^{(k)}\right), 
\end{equation}
where we use $S_T:= S(\chi(T))$ as the von Neumann entropy of the thermal qubit at temperature $T$.
We now show that the increase in entropy of probe is at least as great as the decrease in entropy of the thermal qubit.

\

\begin{lem}\label{probe entropy inequality}
Let the compound system of probe $\pp$ and $k$\ts{th} thermal qubit $\qq$ be
\begin{equation}
\rho_{\pp+\qq}^{(k)}:= \chi(T)\otimes \rho\subw{\pp}^{(k-1)}.
\end{equation}
Let this system evolve according to the quantum channel defined in \eq{total evolution}. The $\mathrm{SWAP}$ operation need not be perfect. Furthermore, let the initial state of the probe be
\begin{equation}
\rho\subw{\pp}^{(0)}:= \bigotimes_{n=1}^N \chi(T_n),
\end{equation}
such that for all $n$, $T_n\leqslant T$. It follows that 
\begin{equation}
\Delta S\subw{\pp}^{(k)} \geqslant \Delta S\subw{\qq}^{(k)} \geqslant 0.\label{entropy change of probe is always positive}
\end{equation}
\end{lem}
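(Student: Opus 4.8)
The plan is to split the double inequality \eqref{entropy change of probe is always positive} into its two halves, since each rests on a different fact. The right-hand inequality $\Delta S\subw{\qq}^{(k)} \geqslant 0$ is essentially a corollary of the cooling result. By \thmref{Cooling protocol always works}, under the stated hypotheses the post-evolution thermal qubit is again a Gibbs state, $\rho\subw{\qq}^{(k)} = \chi(T^{(k)})$ with $T^{(k)} \leqslant T$. It then suffices to observe that the von Neumann entropy of a single thermal qubit $\chi(T')$ is a monotonically increasing function of $T'$ on $[0,\infty)$: as $T'$ grows, the populations flow from the pure ground state toward the maximally mixed state, so entropy rises monotonically from $0$ to $\ln 2$. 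Hence $S(\rho\subw{\qq}^{(k)}) = S(\chi(T^{(k)})) \leqslant S(\chi(T)) = S_T$, which gives $\Delta S\subw{\qq}^{(k)} = S_T - S(\rho\subw{\qq}^{(k)}) \geqslant 0$.

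For the left-hand inequality $\Delta S\subw{\pp}^{(k)} \geqslant \Delta S\subw{\qq}^{(k)}$, the plan is to track the total entropy of the $(N+1)$-spin compound through the evolution $\vv_{\tau_k} = \mathrm{SWAP}\circ(\e_{\tau_k}\otimes \mathrm{id}\subw{\qq})$. First I would argue that $\vv_{\tau_k}$ is unital: $\mathrm{SWAP}$ is unital by assumption, $\e_{\tau_k}$ is unital, tensoring with the identity preserves unitality, and the composition of unital channels is unital. A unital channel never decreases the von Neumann entropy (equivalently, by the Uhlmann majorization already invoked in \lemref{suficient condition reduced state probe thermal}, the output spectrum is majorized by the input spectrum). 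Because the input $\rho^{(k)}\subw{\pp+\qq} = \chi(T)\otimes \rho\subw{\pp}^{(k-1)}$ is a product state, its entropy is additive, so
\begin{equation}
S\left(\rho^{(k)}\subw{\pp+\qq}(\tau_k)\right) \geqslant S_T + S\left(\rho\subw{\pp}^{(k-1)}\right).
\end{equation}
Next I would apply subadditivity of the von Neumann entropy to the output marginals, $S(\rho^{(k)}\subw{\pp+\qq}(\tau_k)) \leqslant S(\rho\subw{\qq}^{(k)}) + S(\rho\subw{\pp}^{(k)})$, and chain the two bounds. Rearranging the resulting inequality $S_T + S(\rho\subw{\pp}^{(k-1)}) \leqslant S(\rho\subw{\qq}^{(k)}) + S(\rho\subw{\pp}^{(k)})$ yields precisely $S(\rho\subw{\pp}^{(k)}) - S(\rho\subw{\pp}^{(k-1)}) \geqslant S_T - S(\rho\subw{\qq}^{(k)})$, i.e. $\Delta S\subw{\pp}^{(k)} \geqslant \Delta S\subw{\qq}^{(k)}$.

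The conceptually delicate point is recognizing that \emph{unitality} of the whole evolution, not unitarity, is what drives the argument: the free evolution may be genuinely dissipative (dephasing), and even the imperfect $\mathrm{SWAP}$ need not be a unitary, yet both are unital, and that alone forbids the joint entropy from dropping below its initial product value. The only other step needing care is the additivity of the input entropy, which holds because the probe and qubit begin uncorrelated; this is exactly why the bound compares $\rho\subw{\pp}^{(k)}$ against $\rho\subw{\pp}^{(k-1)}$. I expect no serious obstacle beyond correctly marshalling the two standard entropy facts --- monotonicity under unital maps and subadditivity --- together with \thmref{Cooling protocol always works}; the monotonicity of $S(\chi(T'))$ in $T'$ should be stated explicitly but is elementary for a two-level system.
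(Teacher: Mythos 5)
Your proof is correct and follows essentially the same route as the paper's: unitality of the composite channel plus additivity on the initial product state gives the lower bound on the joint output entropy, subadditivity gives the upper bound in terms of the marginals, and chaining them yields $\Delta S\subw{\pp}^{(k)} \geqslant \Delta S\subw{\qq}^{(k)}$, while \thmref{Cooling protocol always works} gives $\Delta S\subw{\qq}^{(k)} \geqslant 0$. Your only addition is spelling out the monotonicity of $S(\chi(T'))$ in $T'$, which the paper leaves implicit when invoking that theorem.
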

\begin{proof}
We denote the state of the compound system after the action of the quantum channel as $\rho_{\pp+\qq}^{(k)}(\tau_k)$. Due to the unitality of this quantum channel, which does not decrease the von Neumann entropy \cite{Nakahara-Decoherence}, and the subadditivity of the von Neumann entropy \cite{nielsenchuang},  it follows that
\begin{equation}
S\left(\rho_{\pp+\qq}^{(k)}(\tau_k)\right)  \geqslant S\left(\rho_{\pp+\qq}^{(k)}\right) = S\left(\rho\subw{\pp}^{(k-1)}\right) + S_T.
\end{equation}
 Furthermore, given the partial traces of the time-evolved compound system as given by  \eq{partial traces post total evolution}, we can further use the subadditivity of the von Neumann entropy to  show that 
\begin{equation}
 S\left(\rho_{\pp}^{(k)}\right) + S\left(\rho_{\qq}^{(k)}\right) \geqslant S\left(\rho_{\pp+\qq}^{(k)}(\tau_k)\right).
\end{equation}
By combining the above equations, we arrive at 
\begin{equation}
\Delta S\subw{\pp}^{(k)} \geqslant \Delta S\subw{\qq}^{(k)}.
\end{equation}
Finally, given the initial state of the probe and the dynamics in question, due to \thmref{Cooling protocol always works} we know that $\Delta S\subw{\qq}^{(k)}$ is never negative.
\end{proof}

 Moreover, the total entropy increase of the probe poses an upper bound on the total entropy reduction of the system of thermal qubits, for a cooling process of any length $k$. The total entropy reduction of the thermal qubits, up to stage $k$, is defined as 
\begin{equation}
\Delta S\subw{\qq}^{\mathrm{total}}(k) := \sum_{i=1}^k \Delta S\subw{\qq}^{(i)}.
\end{equation}

\begin{thm}\label{total entropy reduction inequality}
Consider the setup of \lemref{probe entropy inequality}. If the probe has an XXZ spin Hamiltonian, then the total entropy reduction of the thermal qubits obeys the inequality
\begin{equation}
\Delta S\subw{\qq}^{\mathrm{total}}(k) \leqslant N S_T,
\end{equation}
the upper bound being realizable only if the probe is initially in the pure state $\rho\subw{\pp}^{(0)}= \pr{1}^{\otimes N}$.
\end{thm}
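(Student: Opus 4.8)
The plan is to reduce the claim to a bound on the total entropy gained by the probe, and then control that gain by subadditivity. First I would invoke \lemref{probe entropy inequality}, which supplies $\Delta S\subw{\qq}^{(i)} \leqslant \Delta S\subw{\pp}^{(i)}$ at every stage $i$. Summing over $i = 1,\dots,k$ and using $\Delta S\subw{\pp}^{(i)} = S(\rho\subw{\pp}^{(i)}) - S(\rho\subw{\pp}^{(i-1)})$, the right-hand side telescopes, so that
\begin{equation}
\Delta S\subw{\qq}^{\mathrm{total}}(k) = \sum_{i=1}^k \Delta S\subw{\qq}^{(i)} \leqslant \sum_{i=1}^k \Delta S\subw{\pp}^{(i)} = S(\rho\subw{\pp}^{(k)}) - S(\rho\subw{\pp}^{(0)}).
\end{equation}
This isolates the only quantity I actually need to control: the net entropy increase of the probe over the whole protocol.

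Second, I would bound $S(\rho\subw{\pp}^{(k)}) \leqslant N S_T$ via subadditivity of the von Neumann entropy. Writing $\rho_n^{(k)}$ for the reduced state of the $n$\ts{th} probe spin after stage $k$, subadditivity gives $S(\rho\subw{\pp}^{(k)}) \leqslant \sum_{n=1}^N S(\rho_n^{(k)})$. The essential input is that each marginal $\rho_n^{(k)}$ is $\sigma^z$-thermal at a temperature no greater than $T$, which follows by applying \lemref{suficient condition reduced state probe thermal} to the enlarged system and using that the stage channels (free evolution followed by a possibly imperfect SWAP) are unital and $\sigma^z$-excitation conserving. Since the entropy of a single $\sigma^z$-thermal qubit is monotonically increasing in temperature, $T^{(n)} \leqslant T$ gives $S(\rho_n^{(k)}) \leqslant S_T$, and hence $S(\rho\subw{\pp}^{(k)}) \leqslant N S_T$. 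Combining with $S(\rho\subw{\pp}^{(0)}) \geqslant 0$ yields the chain
\begin{equation}
\Delta S\subw{\qq}^{\mathrm{total}}(k) \leqslant S(\rho\subw{\pp}^{(k)}) - S(\rho\subw{\pp}^{(0)}) \leqslant N S_T - S(\rho\subw{\pp}^{(0)}) \leqslant N S_T,
\end{equation}
which is the claimed inequality.

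For the equality clause I would track which inequalities must saturate. The last step forces $S(\rho\subw{\pp}^{(0)}) = 0$, i.e.\ the probe must begin in a pure state. Since the admissible initial states have the product form $\rho\subw{\pp}^{(0)} = \bigotimes_{n=1}^N \chi(T_n)$, purity requires every factor to be pure, which occurs only at $T_n = 0$, where $\chi(0) = \pr{1}$. Hence the upper bound can be attained only if $\rho\subw{\pp}^{(0)} = \pr{1}^{\otimes N}$, as stated.

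I expect the main obstacle to be the second step. After the first swap the probe is no longer a product state and may be correlated with the qubits, so $S(\rho_n^{(k)}) \leqslant S_T$ cannot simply be read off from the initial-state structure. The clean route is to view the entire protocol up to stage $k$ as a \emph{single} unital, $\sigma^z$-excitation-conserving channel acting on the product state $\bigotimes_{n=1}^{N+K}\chi(T_n)$ of all thermal qubits and probe spins — exactly as in the proof of \thmref{Cooling protocol always works} — so that \lemref{suficient condition reduced state probe thermal} applies directly and certifies that \emph{every} single-spin marginal, in particular each probe marginal, stays $\sigma^z$-thermal with temperature at most $T$; one only needs the immediate observation that the composite channel inherits both properties from its factors. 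As a fallback I would keep a monotone-convergence argument: \lemref{probe entropy inequality} makes $S(\rho\subw{\pp}^{(k)})$ nondecreasing in $k$, while \thmref{XXZ stationary only if} identifies $\chi(T)^{\otimes N}$, of entropy $N S_T$, as the unique stationary state, so a nondecreasing sequence tending to $N S_T$ cannot exceed it — though this presupposes convergence and is therefore less self-contained than the subadditivity bound.
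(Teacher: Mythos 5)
Your proof is correct, and its core step is genuinely different from the paper's. The opening move (telescoping $\Delta S\subw{\qq}^{\mathrm{total}}(k) \leqslant S(\rho\subw{\pp}^{(k)}) - S(\rho\subw{\pp}^{(0)})$ via \lemref{probe entropy inequality}) and the equality clause (saturation forces $S(\rho\subw{\pp}^{(0)})=0$, and purity of $\bigotimes_n \chi(T_n)$ with $T_n \leqslant T$ forces $T_n=0$, i.e.\ $\pr{1}^{\otimes N}$) coincide with the paper, though you spell out the purity argument more carefully than the paper's one-line appeal to ``the conditions required for always cooling.'' The divergence is in how $S(\rho\subw{\pp}^{(k)}) \leqslant N S_T$ is obtained. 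The paper invokes \thmref{XXZ stationary only if} and asserts that ``the probe will pseudo-thermalize to a state with entropy $N S_T$,'' i.e.\ it combines the monotone growth of the probe entropy with convergence to the stationary state $\chi(T)^{\otimes N}$ --- precisely the route you relegate to a fallback. You instead use subadditivity together with \lemref{suficient condition reduced state probe thermal} applied to all $N+K$ spins under the single composite channel formed by the whole protocol (the same device as in the proof of \thmref{Cooling protocol always works}), which certifies that every probe marginal stays $\sigma^z$-thermal at temperature at most $T$ and hence has entropy at most $S_T$. Your route buys two things: (i) it does not presuppose convergence --- \thmref{XXZ stationary only if} establishes only \emph{uniqueness} of the stationary state, not that the protocol (whose stage channels vary with $k$ through the waiting times $\tau_k$) actually converges to it, so the paper's proof as written contains exactly the unjustified step you flag, and yours avoids it; (ii) it makes the XXZ hypothesis superfluous for the inequality, which in your argument holds for any unital, $\sigma^z$-excitation-conserving free evolution. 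What the paper's route buys is brevity and a conceptual tie-in to the thermometry/pseudo-thermalization picture, at the cost of resting on an unproven convergence claim. In short, your proof is not only a valid alternative but is the more self-contained of the two.
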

\begin{proof}
It follows from \lemref{probe entropy inequality} that for any $k$,
\begin{align}
 \sum_{i=1}^k \Delta S\subw{\pp}^{(i)} &= S\left(\rho^{(k)}\subw{\pp}\right) - S\left(\rho^{(0)}\subw{\pp}\right)\geqslant \Delta S\subw{\qq}^{\mathrm{total}}(k).
\end{align}    

As shown in \thmref{XXZ stationary only if} the probe will pseudo-thermalize to a state with entropy $N S_T$. It therefore follows that the total entropy reduction of the thermal qubits obeys the inequality
\begin{equation}
\Delta S\subw{\qq}^{\mathrm{total}}(k) \leqslant N S_T.
\end{equation}
The upper limit is achievable only if the probe is initially in a pure state. If this pure state is to satisfy the conditions required for always cooling, it has to be in the pure state $\ket{1}^{\otimes N}$. 
\end{proof}

\section{NV-P1 interaction Hamiltonian}\label{NVDS}
The quantization axis of P1 centers is the direction of magnetic field, while that of NVs is their orientation. The dipolar interaction between two spins after secular approximation has only an Ising interaction \cite{DressedCP}.

\begin{equation}
\begin{split}
H_{int}=-\frac{J_0}{r^3}\big[3(\hat{r}\cdot\hat{z_1})(\hat{r}\cdot\hat{z_2})-\hat{z_1}\cdot\hat{z_2}]S^{z}_{NV}\otimes S^{z}_{P1}
\\ =-\frac{J_0}{r^3}q S^{z}_{NV}\otimes S^{z}_{P1}
\end{split}
\end{equation}

where $J_0 = \frac{\mu_0\gamma_1\gamma_2\hbar^2}{4\pi} \simeq (2\pi)52$ MHz$\cdot$nm$^3$, $\hat{z_1}$ and $\hat{z_2}$ are quantization axis of NV and P1 centers respectively, $\hat{r}$ is a unit vector directing from a NV to a P1 center, and $q = 3(\hat{r}\cdot\hat{z_1})(\hat{r}\cdot\hat{z_2})-\hat{z_1}\cdot\hat{z_2}$ (Appendix (E)).

\section{NV-NV interaction Hamiltonian}\label{NVNV}
We start with the universal dipolar interaction Hamiltonian with two spins labeled as 1 and 2 \cite{FlorianRTCP}.
\begin{equation}
H_{int} = -\frac{J_0}{r^3}\big[3(\vec{S_1}\cdot\hat{r})(\vec{S_2}\cdot\hat{r})-\vec{S_1}\cdot\vec{S_2}\big]
\end{equation}
where $J_0 = \frac{\mu_0\gamma_1\gamma_2\hbar^2}{4\pi} \simeq (2\pi)52$ MHz$\cdot$nm$^3$ and $\hat{r}$ is a unit vector directing from spin 1 to spin 2. Since the crystal field of diamond gives zero field splitting between $\ket{m_s=0}$ and $\ket{m_s=\pm1}$, a natural quantization axis of an NV is its own orientation. Therefore, we adopt the dual coordinate system $(\hat{x_1},\hat{y_1},\hat{z_1}), (\hat{x_2},\hat{y_2},\hat{z_2})$ which corresponds to orientations of each NV. The interaction Hamiltonian can then be expanded with these vectors.
\begin{widetext}
\begin{equation}\label{interactionHamiltonianExpanded}
\begin{split}
H_{int} = -\frac{J_0}{r^3}\vec{S_1}^T\cdot
\begin{bmatrix} 
3 (\hat{r}\cdot\hat{x_1})(\hat{r}\cdot\hat{x_2})-\hat{x_1}\cdot\hat{x_2} & 
3 (\hat{r}\cdot\hat{x_1})(\hat{r}\cdot\hat{y_2})-\hat{x_1}\cdot\hat{y_2} & 
3 (\hat{r}\cdot\hat{x_1})(\hat{r}\cdot\hat{z_2})-\hat{x_1}\cdot\hat{z_2} \\ 
3 (\hat{r}\cdot\hat{y_1})(\hat{r}\cdot\hat{x_2})-\hat{y_1}\cdot\hat{x_2} &
3 (\hat{r}\cdot\hat{y_1})(\hat{r}\cdot\hat{y_2})-\hat{y_1}\cdot\hat{y_2} & 
3 (\hat{r}\cdot\hat{y_1})(\hat{r}\cdot\hat{z_2})-\hat{y_1}\cdot\hat{z_2} \\ 
3 (\hat{r}\cdot\hat{z_1})(\hat{r}\cdot\hat{x_2})-\hat{z_1}\cdot\hat{x_2} &
3 (\hat{r}\cdot\hat{z_1})(\hat{r}\cdot\hat{y_2})-\hat{z_1}\cdot\hat{y_2} &
3 (\hat{r}\cdot\hat{z_1})(\hat{r}\cdot\hat{z_2})-\hat{z_1}\cdot\hat{z_2} 
\end{bmatrix}
\cdot\vec{S_2}
\\ \simeq -\frac{J_0}{r^3}\vec{S_1}^T\cdot
\begin{bmatrix} 
3(\hat{r}\cdot\hat{x_1})(\hat{r}\cdot\hat{x_2})-\hat{x_1}\cdot\hat{x_2} & 
3(\hat{r}\cdot\hat{x_1})(\hat{r}\cdot\hat{y_2})-\hat{x_1}\cdot\hat{y_2} & 
0 \\ 
3(\hat{r}\cdot\hat{y_1})(\hat{r}\cdot\hat{x_2})-\hat{y_1}\cdot\hat{x_2} &
3(\hat{r}\cdot\hat{y_1})(\hat{r}\cdot\hat{y_2})-\hat{y_1}\cdot\hat{y_2} & 
0 \\ 
0 &
0 &
3(\hat{r}\cdot\hat{z_1})(\hat{r}\cdot\hat{z_2})-\hat{z_1}\cdot\hat{z_2} 
\end{bmatrix}
\cdot\vec{S_2}
\end{split}
\end{equation}
\end{widetext}
The second line of \ref{interactionHamiltonianExpanded} is justified by the rotating wave approximation (RWA) \cite{RWA}. The NV coupling strength is on the order of a few tens of kHz while zero field splitting (ZFS) of an NV is 2.88 GHz. Thus, the product of a fast rotating term ($S_x, S_y$) and a non-rotating term ($S_z$) is averaged out in the time evolution, which is consistent with the energy conservation.

To simplify further, terms in the Hamiltonian can be decomposed into symmetrical and anti-symmetrical combination of terms by introducing new variables - $g^+,g^-,h^+,h^-,q$ \cite{LukinTherm}:
\begin{widetext}
\begin{equation}
H_{int} = -\frac{J_0}{r^3}\big[(g^++g^-)S_{1}^x\otimes S_{2}^x + (g^+-g^-) S_{1}^y\otimes S_{2}^y + (h^+ + h^-)S_{1}^x \otimes S_{2}^y + (h^+-h^-)S_{1}^y\otimes S_{2}^x + qS_{1}^z\otimes S_{2}^z \big].
\end{equation}
\begin{equation}
\begin{split}
g^+ = \frac{1}{2} \big[3(\hat{r}\cdot\hat{x_1})(\hat{r}\cdot\hat{x_2})-\hat{x_1}\cdot\hat{x_2}+3 (\hat{r}\cdot\hat{y_1})(\hat{r}\cdot\hat{y_2})-\hat{y_1}\cdot\hat{y_2}\big]
\\ g^- = \frac{1}{2} \big[3(\hat{r}\cdot\hat{x_1})(\hat{r}\cdot\hat{x_2})-\hat{x_1}\cdot\hat{x_2}-3 (\hat{r}\cdot\hat{y_1})(\hat{r}\cdot\hat{y_2})+\hat{y_1}\cdot\hat{y_2}\big]
\\ h^+ = \frac{1}{2} \big[3(\hat{r}\cdot\hat{x_1})(\hat{r}\cdot\hat{y_2})-\hat{x_1}\cdot\hat{y_2}+3 (\hat{r}\cdot\hat{y_1})(\hat{r}\cdot\hat{x_2})-\hat{y_1}\cdot\hat{x_2}\big]
\\ h^- = \frac{1}{2} \big[3(\hat{r}\cdot\hat{x_1})(\hat{r}\cdot\hat{y_2})-\hat{x_1}\cdot\hat{y_2}-3(\hat{r}\cdot\hat{y_1})(\hat{r}\cdot\hat{x_2})+\hat{y_1}\cdot\hat{x_2}\big]
\\ q = 3(\hat{r}\cdot\hat{z_1})(\hat{r}\cdot\hat{z_2})-\hat{z_1}\cdot\hat{z_2}
\end{split}
\end{equation}
\end{widetext}
Out of these, $g^-(S_{1}^x\otimes S_{2}^x-S_{1}^y\otimes S_{2}^y)$ and $h^+(S_{1}^x\otimes S_{2}^y+S_{1}^y\otimes S_{2}^x)$ are non-energy conserving terms that can be eliminated by RWA. The other two terms can also be simplified in our situation, where we restrict the dynamics to a two-dimensional subspace spanned by  $\ket{m_s=0}$ and $\ket{m_s=1}$.

\begin{equation}
\begin{split}
S_{1}^x\otimes S_{2}^x+S_{1}^y\otimes S_{2}^y &= (\ket{01}\!\bra{10}+\ket{10}\!\bra{01})
\\S_{1}^x\otimes S_{2}^y-S_{1}^y\otimes S_{2}^x &= i (\ket{01}\!\bra{10}-\ket{10}\!\bra{01}).
\end{split}
\end{equation}
The resulting Hamiltonian has the form of,

\begin{equation}
\begin{split}
H_{int} = -\frac{J_0}{r^3}\big[g^+(\ket{01}\!\bra{10}+\ket{10}\!\bra{01})&
\\ +ih^-(\ket{01}\!\bra{10}-\ket{10}\!\bra{01})&
\\ +q\ket{11}\!\bra{11}]&.
\end{split}
\end{equation}

Note that the strength of $S_{1}^x\otimes S_{2}^x$ and $S_{1}^y\otimes S_{2}^y$ are the same because the spin is rotating with the ZFS in the lab frame. This interaction Hamiltonian can be expressed in terms of pseudo spin-$\frac{1}{2}$ pauli operators for $\ket{m_s=0}$ and $\ket{m_s=1}$ states.

\begin{equation}
\begin{split}
H_{int} = -\frac{J_0}{r^3}\big[2g^+(\sigma_1^x\otimes\sigma_2^x+ \sigma_1^y\otimes\sigma_2^y)
\\+2ih^-(\sigma_1^x\otimes\sigma_2^y - \sigma_1^y\otimes\sigma_2^x)
\\+q \sigma_1^z\otimes\sigma_2^z]
\\ + (\mbox{non-interacting\, terms}).
\end{split}
\end{equation} \\

\end{document}